\documentclass[12pt, a4paper]{article}
\usepackage[utf8]{inputenc}
\usepackage[T1]{fontenc}
\usepackage[margin=1.25in]{geometry}
\usepackage{xcolor}
\usepackage{natbib}
\setcitestyle{authoryear,open={(},close={)}}
\usepackage{titlesec}
\titleformat*{\section}{\large\bfseries}
\titleformat*{\subsection}{\normalsize\bfseries}
\titleformat*{\subsubsection}{\normalsize\bfseries}

\usepackage{caption}   

\usepackage{comment}
\usepackage{xcolor}
\usepackage{tikz}
\usetikzlibrary{patterns}
\usepackage{graphics}
\usepackage{graphicx}
\usepackage{color}
\usepackage{amsmath}
\usepackage{amsthm}
\usepackage{amsfonts}
\usepackage{geometry}
\usepackage[normalem]{ulem}
\usepackage{float}
\usepackage{dsfont} 

\usepackage{mathabx} 

\usepackage{hyperref}
\usepackage{cleveref}
\hypersetup{colorlinks=true,linkcolor=blue,citecolor=blue}
\usepackage{dsfont} 
\usepackage{nicefrac}

\date{}

\definecolor{ForestGreen}{rgb}{.13,.54,.13}
\definecolor{violet}{cmyk}{0.79,0.88,0,0}
\definecolor{darkBrickRed}{rgb}{.70,.13,.16}

\ifdefined\DRAFT

\newcommand{\fed}[1]{{\color{ForestGreen}{(\textbf{Fedor:} #1)}}}

\else
\newcommand{\fed}[1]{}

\fi

\newtheorem{theorem}{Theorem}
\newtheorem*{theorem*}{Theorem}
\newtheorem{lemma}{Lemma}
\newtheorem{claim}{Claim}

\newtheorem{proposition}{Proposition}
\newtheorem*{proposition*}{Proposition}

\theoremstyle{definition}

\theoremstyle{remark}

\def\eps{\varepsilon}
\def\cus{\mathrm{cousin}}
\def\kill{\mathrm{killer}}

\usepackage{xparse}
\usepackage{mleftright}

\DeclareDocumentCommand\Pr{ m g }{\ensuremath{
    {   \IfNoValueTF {#2}
      {\mathbb{P}\mleft[{#1}\mright]}
      {\mathbb{P}\mleft[{#1}\,\middle\vert\,{#2}\mright]}%
    }
}}
\DeclareDocumentCommand\E{ m g }{\ensuremath{
    {   \IfNoValueTF {#2}
      {\mathbb{E}\mleft[{#1}\mright]}
      {\mathbb{E}\mleft[{#1}\,\middle\vert\,{#2}\mright]}%
    }
}}

\title{Decentralized Equilibrium for Bitcoin Mining\footnote{We thank Matt Weinberg for useful comments and suggestions.}}

\author{ \large Manuel Mueller-Frank\thanks{IESE Business School. Email: mmuellerfrank@iese.edu. Manuel Mueller-Frank was supported by an IESE High Impact grant.} \ \ \ %
Minghao Pan\thanks{Caltech. Email: mpan2@caltech.edu.} \ \ \ Omer Tamuz\thanks{Caltech. Email: tamuz@caltech.edu.  Omer Tamuz was supported by a National Science Foundation CAREER award (DMS-1944153) and a MURI grant.}}

\begin{document}

\maketitle

\begin{abstract}
    Cryptocurrencies such as Bitcoin are defined by protocols that specify how participants record transactions and create new currency units. These protocols are not enforced by law or any central entity and instead are intended to be incentive compatible. However, the Bitcoin mining protocol proposed by \cite{nakamoto2008Bitcoin} and implemented in practice is known not to constitute an equilibrium  \citep{eyal2018majority}. This leaves open the question of whether the decentralized outcome intended by Nakamoto can be sustained in equilibrium in the Bitcoin mining game. We propose inertial mining, a novel mining protocol that induces that outcome, i.e., a single longest chain in which each miner's asymptotic share of blocks equals its share of computational power. Our main result establishes that inertial mining constitutes an equilibrium, assuming no miner controls one half or more of the computational power. Inertial mining coincides with Nakamoto's protocol on the equilibrium path, and can be implemented in Bitcoin without any changes to its consensus mechanism or blockchain architecture. When a single miner controls more than half of the computational power, we show that no decentralized equilibrium exists.

    
   \end{abstract}

\section{Introduction}
\cite{nakamoto2008Bitcoin} introduced Bitcoin, a digital money that functions without a central entity controlling the monetary base and intermediating transactions.  Although Bitcoin has so far failed to gain significant traction as a medium of exchange, it has become a major alternative financial asset and a digital store of value. Its market capitalization first surpassed \$1 billion in 2013, \$100 billion in 2017, and \$1 trillion in 2021.\footnote{See www.coingecko.com/en/coins/Bitcoin.} There is a  growing literature that analyzes Bitcoin from an economic and game-theoretic perspective. See, for example, \cite*{biais2019blockchainfolk}, \cite{SchillingUhlig2019SimpleBitcoinEconomics}, \cite{leshno2020Bitcoin}, \cite{huberman2021monopoly},  \cite{pagnotta2022decentralizing} and \cite{bahrani2024undetectable}. For a survey, see \cite{HalaburdaHaeringerGansGandal2022MicroeconomicsCrypto}.

The main feature that makes Bitcoin attractive as a reserve asset is its decentralized nature, which implies that no entity, governmental or otherwise, can censor transactions. Decentralization, as well as the absence of a central intermediary that selects and processes valid transactions, introduces two major challenges. First, it requires consensus among all participants about the allocation of Bitcoin ownership.
Second, it requires that all entities participating in the transaction selection and execution follow the protocol, rather than trying to increase their profits by deviating from it. Satoshi Nakamoto addressed these complications via the so-called \emph{Nakamoto consensus} mechanism. 

The basic idea of this mechanism is that miners can create \emph{blocks} at a computational cost. These blocks, which contain transactions, are chained together, with the resulting blockchain serving two functions: it assigns Bitcoin ownership to miners and sequentially records transactions. Nakamoto’s \emph{Bitcoin mining protocol} describes how miners determine which blocks form part of the established transaction history. It is simple, specifying that miners always add blocks to the end of the longest chain and that, among competing chains, the longest chain is treated as the established transaction history, whereas blocks on the other chains are not included in that history. Competing chains can either be introduced by strategic miners, or inadvertently through \emph{network outages}: episodes in which internet traffic between different parts of the network is slowed or halted. 

The initial understanding was that  the Bitcoin mining protocol constitutes an equilibrium for the participants.  \cite{eyal2018majority}, however, establish that this standard protocol (or strategy profile) is not an equilibrium, and construct a profitable deviation strategy they call ``selfish mining''. 
This gives rise to a paradox: Bitcoin's core distinguishing feature is its decentralization, yet its prescribed protocol is subject to profitable deviations. To the best of our knowledge, no equilibrium has been shown to implement the decentralized outcome that Bitcoin is intended to produce (but some related work is discussed below). 
Thus, even at the level of equilibrium existence, it remains unclear whether Bitcoin can achieve its defining promise of decentralization. Does there exist a \emph{fair, decentralized equilibrium} in which multiple miners contribute to the longest chain and each miner's share of blocks accurately reflects its computational mining power?

The contribution of this paper is to resolve this paradox. We propose the \emph{inertial mining protocol}, which constitutes an equilibrium for the Nakamoto consensus mechanism, coincides with the Nakamoto protocol on-path, and thus produces the same outcome.

Following the computer science literature, we model the interaction between miners as the following mining game. Each miner $i$ controls a fraction $\alpha_i$ of the total computational mining power. Time is discrete, and at time $0$ an initial \emph{genesis block} comes into existence. In each subsequent time period, each miner tries to mine a new block, and one of the miners is selected by nature to succeed, with probability proportional to their mining power. A mined block is attached to one of the previously mined blocks, with miners choosing where to attach their blocks. A miner can choose to extend the currently longest chain (thus following the standard Bitcoin protocol) or attach blocks elsewhere, potentially creating competing chains. Miners can also choose when to publish their blocks, making them visible to other miners.

We assume that miners are long-lived and adopt the standard assumption that the payoff of a given miner is equal to the share of blocks he mined in the longest chain. Note that if all miners follow the standard mining protocol, then all blocks are consecutively mined on one chain and asymptotically each miner $i$'s payoff is equal to his mining power $\alpha_i$. This is the fair outcome which the Nakamoto consensus mechanism strives to achieve.

If a miner $i$ has mining power $\alpha_i > \frac12$, the standard Bitcoin mining protocol is not an equilibrium, for a trivial reason: that miner can eventually overtake any public chain by extending his own private branch.  \cite{eyal2018majority} showed that there is a profitable deviation even if $\alpha_i$ is below one half (but not too low, see more below). The core idea is that a \emph{selfish miner} withholds mined blocks, privately builds a parallel chain to the public longest chain, and selectively discloses it if his private chain generates a lead over the public one. This way the miners that follow the standard mining protocol waste resources and the selfish miner ends up with a payoff strictly larger than his computational power.

Our contribution is to construct an   equilibrium such that no miner has an incentive to deviate, assuming $\max_i \alpha_i < 1/2$. The inertial mining protocol we propose works as follows. Consider miners who are working on the last block in the current longest chain. If a new block gets appended to this chain, the miners will switch to working on the new last block, as in the standard mining protocol. However, if a new chain is published that does not extend the current longest chain (as happens in the selfish mining deviation), the miners will only switch to it if it is longer by at least $I>0$ blocks. The number $I$ is a parameter of the inertial mining protocol. To ensure that this is an equilibrium, $I$ needs to be chosen to be large enough, as a function of  $\max_i\alpha_i$. The closer this is to one half, the larger $I$ needs to be. When there is a miner with power greater than one half, no choice of $I$ makes inertial mining an equilibrium. Indeed, when $\max_i\alpha_i$ is greater than one half, we show that every equilibrium is centralized (Proposition~\ref{proposition:onlymonopoly}).

Inertial mining differs from the standard mining protocol in its prescription of which chain to append to if there is more than one public chain, but results in a single chain as the equilibrium outcome. Thus, it achieves the intended outcome of the standard mining protocol. Importantly, it does this robustly as an equilibrium, with miners having no incentive to deviate. It is straightforward to see that under inertial mining, selfish mining is no longer a profitable deviation. The main technical contribution of this paper is to show that no other possible deviation is profitable. This is challenging because the set of possible deviations is large. 

Our proof addresses this by introducing an accounting system that assigns to each block that is strategically mined by the deviating player a set of blocks of the other players that were ``killed'' by that block. A strategically mined block can benefit the deviating player by killing blocks of the other players. However, such a block necessarily involves a risk: it is not mined honestly, and so may not end up as part of the longest chain, and be lost to the deviating player. Our proof shows that no killer block can be expected to displace enough honest blocks to offset the cost of the event that this block does not enter the longest chain. 

\subsection{Related literature}

This paper contributes to the growing literature on the incentive properties and equilibrium analysis of proof-of-work (PoW) blockchains such as Bitcoin. Our starting point is the vulnerability first identified by \cite{eyal2018majority}: Bitcoin’s standard mining protocol does not constitute an equilibrium, since a miner with less than majority mining power may profitably deviate by withholding blocks and releasing them strategically. This selfish mining strategy creates intentional and persistent forks.

\cite*{optimalselfishmining} strengthen this critique by characterizing optimal withholding and disclosure strategies against honest miners following the standard protocol and showing that such strategies can strictly dominate the original selfish-mining deviation. The fact that the canonical Bitcoin mining protocol is strategically fragile motivates the central question we study: whether there exists an equilibrium of a PoW blockchain that generates a single longest chain on the equilibrium path and where each miner's share of blocks equals his mining power, as intended by \cite{nakamoto2008Bitcoin}. Our main result answers this question affirmatively.

Existing equilibrium analyses of mining games consider more restrictive information structures or strategy spaces. 
\cite{kiayias2016blockchain} analyze a complete information Bitcoin mining game in which each block discovery and the identity of the successful miner immediately become common knowledge. They restrict each miner to build at most one block at each depth, which reduces the relevant state to a two-branch race between the honest branch and a deviating branch. They show that the standard Bitcoin protocol is an equilibrium when all $\alpha_i$ are smaller or equal to $0.308$. For a game that disallows miners to strategically withhold and disclose blocks, \cite{biais2019blockchainfolk} have established a decentralized Markov Perfect equilibrium. Relatedly, \cite{pagnotta2022decentralizing} embeds PoW mining incentives in a general-equilibrium environment in which the cryptocurrency’s price and security are jointly determined. He shows that multiple equilibria can arise with different price-security combinations. 

A closely related strand of work studies strategic mining in PoW blockchains, but proposes to eliminate selfish mining by changing the consensus mechanism or the blockchain design itself. \cite{heilman2014oneweird} proposes the \emph{Freshness Preferred} mining protocol, under which miners break ties by favoring blocks with more recent verifiable timestamps, thereby reducing the profitability of delayed block release and selfish mining. Because this approach relies on introducing unforgeable timestamps as a novel feature, it necessitates a redesign of blockchain. \cite{solat2017zeroblock} propose changing Bitcoin’s block-validation and chain-extension rules by requiring honest miners to create and mine on a special “ZeroBlock” whenever no valid block arrives within a predetermined interval based on expected block-finding and propagation times. \cite{pass2017fruitchains} propose a redesigned blockchain architecture, FruitChains, that weakens the ability of strategic miners to benefit from fork manipulation and thereby improves robustness to selfish mining. While these contributions demonstrate that selfish-mining incentives can be mitigated, they all rely on changing the consensus mechanism and/or the architecture of the blockchain itself. By contrast, our approach does not modify Bitcoin's underlying architecture or mechanism; instead, we propose an alternative mining protocol which can be implemented in the same technical environment as the standard Bitcoin protocol.


More broadly, our analysis relates to the literature on attacks against Nakamoto consensus and on the economic limits of PoW security. A related strand studies double-spend attacks; see, for example, \cite{Bonneau2016WhyBuyRent} and \cite{GansHalaburda2024ZeroCostMajority}. \cite{budish2025trust} argues that the permissionless security of Nakamoto consensus is intrinsically costly because the flow cost of security must scale with the value at risk from attack. \cite*{leshno2024viability} develop an alternative consensus design that preserves permissionless entry while delivering stronger economically meaningful security guarantees. These papers focus on different vulnerabilities or on alternative institutional designs, whereas our focus is the equilibrium resolution of selfish mining within PoW.

Finally, the motivation for our analysis is further strengthened by recent work emphasizing that selfish mining is not merely a theoretical possibility.  In its more than sixteen years of operation, little selfish mining activity has been observed on Bitcoin.  However, \cite*{li2024statistical} provide empirical evidence consistent with selfish-mining behavior in several other PoW blockchains, with especially strong evidence for Monacoin and Bitcoin Cash. More conceptually, the absence of a known equilibrium in PoW that implements a single longest chain, which satisfies decentralized consensus on the equilibrium path, has been emphasized in discussions surrounding Ethereum’s transition away from PoW; see \cite{buterin2017posfaq} and \cite*{hall2024study}. Our contribution to this debate is to show that such an equilibrium does exist: inertial mining sustains a single decentralized longest chain on the equilibrium path without requiring any modification of Bitcoin’s consensus mechanism or blockchain architecture.

\par
\par

\section{The Model}

We begin by formally defining the concept of a blockchain and consensus, and then introduce the underlying game that captures the strategic interactions involved in Bitcoin mining. This game is widely used in the computer science literature \citep{eyal2018majority,optimalselfishmining, bahrani2024undetectable}.

\subsection{Blocks and chains}
A \emph{block} is a pair $(x,y)$, where $x,y$ take values in a set of labels, which we take to be the unit interval $[0,1]$. The label of the block $(x,y)$ is $x$, and $y$ is the label of the block's parent block. The block $(0,0)$ is called the \emph{genesis} block. A (block)chain is a finite  sequence of blocks $C = (x_1,y_1),(x_2,y_2),\ldots,(x_n,y_n)$ such that $x_i \neq x_j$ for $i \neq j$, $(x_1,y_1)=(0,0)$, and for $i>1$ it holds that $y_i = x_{i-1}$. We say that $(x_i,y_i)$ is the \emph{predecessor} block of $(x_{i+1},y_{i+1})$, and that the latter is a \emph{successor} of the former. For $i < j$ we say that $(x_i,y_i)$ is an \emph{ancestor} of $(x_j,y_j)$, and the latter is a \emph{descendant} of the former.

\subsection{The Bitcoin Mining Game}

There is a finite set of players $N$. Each player $i$ is exogenously assigned \emph{mining power} $\alpha_i > 0$, with $\sum_i\alpha_i = 1$. There are discrete time periods $t \in \{1,2,\ldots\}$.

Each player $i$ has a finite set of \emph{mined blocks} $M_t^i$ at the end of period $t$. We denote by $M_t = \cup_i M_t^i$ the set of all mined blocks. There is a set of \emph{public blocks} $P_t$ at the end of period $t$, which is a subset of $M_t$. At the beginning of period $t$, player $i$ can observe their own past blocks $M_{t-1}^i$ as well as the past public blocks $P_{t-1}$, but not the other players' mined blocks. Thus, the history available to player $i$ at the beginning of period $t$ consists of their own actions up to that point, and in addition $P_0,P_1,\ldots,P_{t-1},M_0^i,M_1^i,\ldots,M_{t-1}^i$. Beyond this, players do not observe the actions of other players.

We set $M_0^i = \emptyset$ for all $i \neq i_0$, and $M_0^{i_0} = \{(0,0)\}$. That is, at the start of the game players have no mined blocks, except for some player $i_0$ who has the genesis block $(0,0)$. We also set $P_0 = \{(0,0)\}$, so that this block is public.\footnote{The choice of $i_0$ is immaterial, and in fact this block could be unowned; we assign it ownership for notational convenience.}

At each time period nature chooses one of the players, where the probability that $i$ is chosen is $i$'s mining power $\alpha_i$. Nature's choices are independent across periods. 

At the beginning of each time period each player $i$ has to choose an action $b_t^i \in [0,1]$. As we shall see, this will be interpreted as the label of the block to which the player wants to add a block. If player $i$ is chosen at period $t$, they mine a new block $m_t = (x,b^i_t)$, where $x$ is chosen independently and uniformly at random from the set of labels $[0,1]$. That is, if they mine $m_t = (x,b^i_t)$ then $M_{t}^i = M_{t-1}^i \cup \{m_t\}$. Otherwise, $M_t^i = M_{t-1}^i$. Note that since block labels are chosen uniformly at random from $[0,1]$, each block will almost surely have a unique label.

Besides choosing $b^i_t$, players can, at the end of any time period, decide to \emph{publish} any blocks in $M_t^i$. That is, each player chooses a subset $B_t^i \subseteq M_t^i$, and we set $P_t =  P_{t-1} \cup \left(\cup_i B_t^i\right)$.

In addition, players observe a public randomization device: an i.i.d.\ process $(\Xi_t)_t$. This will be useful to coordinate on tie-breaking.\footnote{In practice, the chain itself can be used as such a device, by the use of standard secret-sharing techniques, applied to the block labels.} In summary, in period $t$, player $i$ observes $\Xi_1,\ldots,\Xi_{t}$, $P_1,\ldots,P_{t-1}$, and $M_1^i,M_2^i,\ldots,M_{t-1}^i$; chooses $b_t^i$; then, with probability $\alpha_i$, mines a block $(x,b_t^i)$, which is added to $M_{t-1}^i$ to form $M_t^i$; and finally can choose to publish any subset $B_t^i$ of $M_t^i$. This defines the strategy space of the players in the mining game. We next define their utilities.

Recall that $P_t$ is the set of published blocks at time $t$. Since $P_t$ includes the genesis block, it includes at least one chain (starting from the genesis block), and it furthermore includes a longest chain. Let $C_t^1,\ldots,C_t^k \subseteq P_t$ be the longest chains in $P_t$. If there is a unique longest chain $C_t^1$, we denote $C_t=C_t^1$, and set the utility of player $i$ to be the (asymptotic) fraction of blocks it owns in $C_t$:
\begin{align*}
    u_i = \liminf_t \frac{|C_t \cap M_t^i|}{|C_t|}.
\end{align*}
If there is more than one longest chain, we average over them:
\begin{align*}
    u_i=\liminf_t \frac{1}{k}\sum_{\ell=1}^k\frac{|C_t^\ell \cap M_t^i|}{|C_t^\ell|}.
\end{align*}
Utility is defined over relative rather than absolute block rewards. This normalization is standard in the literature and reflects the fact that the unit of account for block production is technologically arbitrary: a protocol change can rescale the number of blocks, just as a stock split can rescale the number of shares, without altering the miner’s underlying economic payoff. More generally, our choice of utility reflects an assumption that miners are long-lived; indeed, due to immense fixed and variable costs, Bitcoin mining is conducted primarily by companies (and some governments) rather than individuals. The computer science literature likewise focuses on the asymptotic fraction rather than (say) the discounted fraction,\footnote{See \cite*{eyal2018majority,optimalselfishmining, bahrani2024undetectable}.} but we do expect that similar results hold for more standard discounted utilities. As we are motivated by the fundamental open question implied by the results of \cite{eyal2018majority}, 
we naturally use their utility function.\footnote{The only modification relative to \cite{eyal2018majority} is our focus on the limit inferior of the fraction of own blocks in the longest chain as opposed to the limit. This is necessary as for general strategies the limit of the fraction might not exist.}

For a given mixed strategy profile $\sigma$, we consider player $i$'s expected utility $U_i(\sigma)=E_\sigma[u_i]$. We say that a strategy profile $\sigma$ is an \emph{equilibrium} if $U_i(\sigma)\ge U_i(\sigma_i',\sigma_{-i})$ for every player $i$ and strategy $\sigma'_i$.

To conclude this discussion, we define some relevant properties of an outcome. We say that an outcome satisfies \emph{decentralized consensus} if there are at least two players $i,j$ such that 
$u_i,u_j>0$. If $u_i=1$ for some player, the outcome satisfies \emph{centralized consensus}. The outcome satisfies \emph{fair decentralized consensus} if for each player $i$, his asymptotic share of blocks $u_i$ equals his mining power $\alpha_i$.

\subsection{Honest Mining, Selfish Mining and an Open Problem}
In the original Bitcoin whitepaper, \cite{nakamoto2008Bitcoin} prescribes a mining strategy under which miners extend the longest public chain and immediately publish newly discovered blocks. Following \cite{eyal2018majority}, we refer to this mining strategy as \emph{honest mining}. Formally, $b_t^i$ is chosen to be the label of the last block in the longest chain in $P_{t-1}$. If there are multiple longest chains, then players choose uniformly at random among the last blocks of the longest chains, using the public randomization device. Whenever a player mines a block, they immediately publish it. It is straightforward to see that all players adopting honest mining results in a single infinite chain that almost surely satisfies fair decentralized consensus.\footnote{This follows directly from a strong law of large numbers.} 

For fair decentralized consensus to be a robust outcome of the Bitcoin mining game for a mining power profile $\alpha$, the honest mining strategy profile needs to be an equilibrium, i.e., a mutual best-response. However, the seminal paper of \cite{eyal2018majority} proves that the honest mining strategy profile is not an equilibrium. They introduce \emph{selfish mining}, a profitable deviation strategy, which we describe now.

Suppose all players follow the protocol, except $i$, who implements selfish mining. Consider the first time $t$ in which $i$ finds a block. Then up to time $t$ all found blocks form one chain. The selfish miner $i$ appends his block to the last block of the longest public chain but does not communicate his block to other miners. He continues mining on his private chain and strategically discloses a subset of his private chain conditional on the number of blocks found by the other miners who always append to the longest public chain. Precisely, his disclosure strategy is as follows. In case of failing to generate a two-block lead over the public chain, the player publishes the block found in period $t$ when the lead of his private chain drops to zero, i.e., when it has the same length as the public chain. In case of generating a private chain lead of two blocks or more, the selfish miner discloses the earlier blocks of the private chain so that from the moment of disclosure onward all other miners append to the last disclosed block of the chain mined by the selfish miner. This causes the miners who follow the Bitcoin mining protocol to waste their found blocks, increasing the share of blocks belonging to the selfish miner. While some blocks of the selfish miner are also lost, in some cases this is profitable.

The profitability of selfish mining crucially depends on the selfish miner's mining share $\alpha_i$ and the behavior of other miners in case of a tie in length between the public chain and the disclosed chain of the selfish miner. Following \cite{eyal2018majority}, let $\gamma \in[0,1]$ denote the probability of miners $j \neq i$ appending to the disclosed block of the selfish miner in case of a tie. Note that in our model, uniform tie-breaking via the public randomization device corresponds to $\gamma=\frac12$. \cite{eyal2018majority} establish the following result.
\begin{proposition*}[Eyal and Sirer]
    For a given $\gamma$, a selfish miner $i$ with computational power $\alpha_i$ achieves a payoff larger than $\alpha_i$ if $\alpha_i>\frac{1-\gamma}{3-2\gamma}$. 
\end{proposition*}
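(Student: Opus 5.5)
We prove the Eyal–Sirer proposition by modeling the selfish-mining deviation as a Markov chain on the selfish miner's private lead. We then compute the long-run share of blocks he owns in the longest chain, and compare that share with $\alpha:=\alpha_i$. Write $\beta=1-\alpha$ for the power of the honest miners, who act as a single honest pool. The relevant state is the lead $s\in\{0,0',1,2,3,\ldots\}$ of the private chain over the public chain. Here $0'$ denotes a published tie of two equal-length branches: one ends in the selfish block, the other in an honest block. The transitions are as follows.
\begin{itemize}
\item From $0$: go to $1$ with probability $\alpha$; stay at $0$ with probability $\beta$.
\item From $1$: go to $2$ with probability $\alpha$; with probability $\beta$ the selfish miner publishes and the state becomes $0'$.
\item From $0'$: return to $0$ with probability one.
\item From $2$: go to $3$ with probability $\alpha$; with probability $\beta$ the selfish miner publishes his whole chain, overriding the public one, and returns to $0$.
\item From $s\ge 3$: go to $s+1$ with probability $\alpha$; go to $s-1$ with probability $\beta$. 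In the latter case the selfish miner publishes enough blocks to stay one ahead.
\end{itemize}
With $\alpha<1/2$ this chain is positive recurrent. Its stationary distribution is explicit: $p_s=(\alpha/\beta)^{s-1}p_1$ for $s\ge 1$, $p_0=p_1/\alpha$, and $p_{0'}=\beta p_1$. Normalization then fixes $p_1$.

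Next we attach to each transition the number of blocks, selfish and honest, that become permanently part of the eventual longest chain. The relevant transitions are these.
\begin{itemize}
\item $0\to0$ on an honest find: one honest block.
\item Resolution of $0'$: if the selfish miner finds the next block, two selfish blocks. If an honest miner extends the selfish branch, which happens with probability $\gamma\beta$, one selfish and one honest block. If an honest miner extends the honest branch, with probability $(1-\gamma)\beta$, two honest blocks.
\item $2\to0$ on an honest find: two selfish blocks.
\item $s\to s-1$ for $s\ge3$: one selfish block.
\end{itemize}
All other transitions add nothing permanent.

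Every block is eventually settled, since each excursion from $0$ returns almost surely. We can therefore apply the ergodic theorem for Markov reward processes, a renewal–reward argument over excursions from state $0$. It shows that almost surely the fraction of selfish blocks in $C_t$ converges to
\[
R=\frac{r_{\mathrm{self}}}{r_{\mathrm{self}}+r_{\mathrm{hon}}},
\]
where $r_{\mathrm{self}}$ and $r_{\mathrm{hon}}$ are the stationary expected rewards per step. One point needs care. The published longest chain at a finite time $t$ may differ from the eventual chain only in a tail of length $O(\text{current lead})$, and this lead is tight, in fact $o(t)$ almost surely. Hence the $\liminf$ in $u_i$ equals this limit. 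A further technicality is that the tie in $0'$ under uniform public randomization corresponds exactly to $\gamma=1/2$. For general $\gamma$ we simply treat honest miners as extending the selfish branch with probability $\gamma$.

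Substituting the stationary distribution and simplifying gives the closed form
\[
R=\frac{\alpha\beta^2(4\alpha+\gamma(1-2\alpha))-\alpha^3}{1-\alpha(1+(2-\alpha)\alpha)}.
\]
It remains to check that $R>\alpha$ if and only if $\alpha>\frac{1-\gamma}{3-2\gamma}$, for $\alpha<1/2$. Clearing the positive denominator turns $R-\alpha$ into $\alpha$ times a polynomial. This polynomial should factor as a positive factor times $\big((3-2\gamma)\alpha-(1-\gamma)\big)$, possibly also times $(1-2\alpha)$. That factorization is the step where algebra errors are most likely, and verifying it, together with the justification that the $\liminf$ equals the stationary ratio, is the main obstacle. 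The remaining steps are routine computations.
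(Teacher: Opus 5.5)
The paper gives no proof of this statement. It is quoted from Eyal and Sirer and used only to motivate the threshold $\frac14$ at $\gamma=\frac12$, so there is nothing in the paper to compare against. Your Markov-chain reconstruction is the Eyal--Sirer argument itself, and its core is correct: the stationary distribution, the reward assignment and the closed form for $R$ all check out. The transition $1\to0'$ is also implementable in this model, because a miner who did not mine in period $t$ knows an honest block was found and can publish his own block in the same period, before $\Xi_{t+1}$ is drawn.

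The factorization you left open goes through, with $(1-\alpha)^2$ rather than $1-2\alpha$ as the extra factor. Write $R=N/D$ with $D=1-\alpha-2\alpha^2+\alpha^3$. Then $N-\alpha D=\alpha P$, where the $\gamma$-free part of $P$ is $3\alpha^3-7\alpha^2+5\alpha-1=(1-\alpha)^2(3\alpha-1)$ and the coefficient of $\gamma$ is $\beta^2(1-2\alpha)$. Hence
\[
R-\alpha=\frac{\alpha(1-\alpha)^2\bigl((3-2\gamma)\alpha-(1-\gamma)\bigr)}{1-\alpha-2\alpha^2+\alpha^3}.
\]
Moreover $D>0$ on $[0,\frac12]$, because its derivative $-1-4\alpha+3\alpha^2$ is negative there and $D(\frac12)=\frac18$.

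Three points need repair.

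First, the $\liminf$ step. Let $g_t$ be the last visit to state $0$ before $t$. The longest public chains at time $t$ agree with the eventual chain up to the public tip at $g_t$, and differ from it in at most $t-g_t$ blocks. This is not $O(\text{current lead})$: after an excursion in which the lead climbs to $50$ and falls back to $2$, the honest public branch carries at least $48$ blocks that will be orphaned. The conclusion still holds. Excursions from $0$ are i.i.d.\ with exponential tails, since above state $2$ the lead is a walk with drift $\alpha-\beta<0$. So $t-g_t=o(t)$ almost surely, while the settled prefix, shared by all longest chains, grows linearly.

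Second, the rule in states $s\geq3$. If you publish ``enough to stay one ahead'', the selfish prefix becomes the unique longest public chain. The honest miners then jump to its tip, and the lead falls from $s$ to $s-2$, not $s-1$. The transitions and rewards you list require publishing nothing in these states. Alternatively, as Eyal and Sirer do, publish only enough to tie the honest branch; this leaves the honest miners' target depth, and hence the lead dynamics, unchanged.

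Third, make the restriction $\frac{1-\gamma}{3-2\gamma}<\alpha<\frac12$ part of the statement you prove. For $\alpha\geq\frac12$ the chain is not positive recurrent and your computation says nothing, even though the version quoted in the paper has no upper bound.
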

This result implies that in our setting $(\gamma = \frac12)$, the threshold above which selfish mining is a profitable deviation from the Bitcoin mining protocol is $\alpha_i>\frac14$. To see the intuition behind selfish mining, consider the extreme case of $\gamma=1$. In this case, even initially withholding a block found in period $t$ does not come with the risk of it not being included in the longest chain: if miner $i$ fails to generate a two-block lead, he publishes the block he found at time $t$ and all other miners subsequently append to his block, leaving stale the honest block found in period $t+1$. Thus, selfish mining results in a strictly higher payoff than honest mining for any mining share $\alpha_i>0$, given the extreme case of $\gamma=1$. 

The analysis by \cite{eyal2018majority} uncovered the following fundamental question for Bitcoin that remains open: can fair decentralized consensus be achieved in equilibrium? This question is of fundamental importance because the core value proposition of Bitcoin and blockchain in general is decentralization. 


\section{Centralized Equilibria}
We now turn to the equilibrium analysis of the  mining game. We first establish a negative result, showing that a centralized equilibrium always exists, under a genericity assumption on mining power.

\begin{proposition}
\label{prop:centralized}
Suppose that $\alpha_1 > \alpha_i$ for all $i>1$. Then there exists an equilibrium in which player $1$ asymptotically owns all the blocks in the longest chain. In particular $u_1=1$ and $u_i=0$ for all $i>1$.
\end{proposition}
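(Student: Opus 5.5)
We construct an explicit strategy profile and show that no player can profitably deviate from it. In the profile, player $1$ builds only on his own blocks and publishes everything, while the other players publish nothing.

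\textbf{The profile.} Player $1$ always sets $b^1_t$ to the last block of the longest chain consisting of the genesis block and blocks in $M^1_{t-1}$. If there are ties, he breaks them by a fixed rule. He publishes every block immediately. Player $1$ can identify these blocks because he observes $M^1_{t-1}$, so the strategy is feasible. Every player $i>1$ mines on an arbitrary fixed target, say the genesis block, and never publishes anything.

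Under this profile, $P_t$ consists of the genesis block and player $1$'s blocks, which form a single chain. Hence $u_1=1$ and $u_i=0$ for $i>1$. Player $1$ cannot improve on utility $1$, so the only thing to check is deviations by a single player $j>1$.

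\textbf{Structure of a deviation by $j$.} Fix any strategy $\sigma'_j$ of $j$ while the others keep the profile. The public blocks are then player $1$'s blocks, which form one chain, together with whatever $j$ publishes. Two observations follow:
\begin{itemize}
\item Players $k\ne 1,j$ never publish, so no public chain contains their blocks.
\item Player $1$ never builds on a block that is not his own.
\end{itemize}
Consequently, any public chain containing a block of $j$ has a rigid shape. It consists of a prefix of player $1$'s chain, ending at the block present at some time $s$, followed only by blocks of $j$ mined after time $s$.

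Let $A(s,t]$ and $J(s,t]$ denote the numbers of blocks mined by player $1$ and by player $j$ in the periods $s+1,\dots,t$. Player $1$'s chain has length $1+A(0,t]$ at time $t$. Suppose the deviating chain forks at time $s$ and is among the longest chains at time $t$. Then its length cannot be smaller than that of player $1$'s chain, which requires $J(s,t]\ge A(s,t]$.

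\textbf{Main step.} The key claim is that almost surely
\[
\sup\{\,t-s : s\le t,\; J(s,t]-A(s,t]\ge 0\,\}=o(t).
\]
The increments of $S_t=J(0,t]-A(0,t]$ are i.i.d. with mean $\alpha_j-\alpha_1<0$; this is exactly where the hypothesis $\alpha_1>\alpha_j$ is used. The strong law of large numbers gives $S_t/t\to\alpha_j-\alpha_1$ almost surely.

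Suppose, for contradiction, that there were $\varepsilon>0$ and infinitely many $t$ with some $s\le t-\varepsilon t$ satisfying $S_t\ge S_s$. There are two cases:
\begin{itemize}
\item If $s$ stays bounded along such $t$, then $S_t\ge S_s$ contradicts $S_t\to-\infty$.
\item If $s\to\infty$, then $S_s\approx(\alpha_j-\alpha_1)s$ and $S_t\approx(\alpha_j-\alpha_1)t$. The condition $S_t\ge S_s$ then forces $(\alpha_1-\alpha_j)(t-s)\le o(t)$, which is impossible when $t-s\ge\varepsilon t$.
\end{itemize}
Thus any longest chain containing $j$'s blocks contains at most $t-s=o(t)$ of them. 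Its length is at least $A(0,t]\sim\alpha_1 t$.

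\textbf{Conclusion.} The fraction of $j$'s blocks in each longest chain therefore tends to $0$ almost surely. This also covers averaging over several tied longest chains, since each of them has this property. Hence $u_j=0$ almost surely under any deviation, so $U_j(\sigma'_j,\sigma_{-j})=0=U_j(\sigma)$, and the profile is an equilibrium.

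I expect the main obstacle to be the uniform-in-$s$ strong law estimate in the main step. It has to rule out $j$ repeatedly forking at many different times.
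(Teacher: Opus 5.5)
Your proof is correct. Its core is the same as the paper's: a longest chain containing the deviator's blocks must leave player $1$'s chain at some time $s$ with $J(s,t]\ge A(s,t]$, and a strong-law estimate uniform in $s$ then gives $(\alpha_1-\alpha_j)(t-s)=o(t)$.

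The real difference is the profile. The paper has every player mine and publish its own independent chain. It therefore first has to show that chains rooted in some other non-deviator's chain, possibly followed by the deviator's blocks, eventually become shorter than player $1$'s chain. That step uses $\alpha_1>\alpha_k$ for the non-deviating $k$ as well. Your profile makes players $i>1$ silent, so such chains never appear, that step disappears, and the hypothesis is used only for the deviator. The paper's symmetric profile is arguably more natural; yours gives a slightly shorter argument.

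On the uniformity issue you single out as the main obstacle, your subsequence case split is valid. The paper handles it a little more directly. It sets $R_t=\max_k\max_{0\le s\le t}|N_k(s)-\alpha_k s|$, which is $o(t)$ almost surely by the strong law. The condition for a forked chain to be longest then becomes $(\alpha_1-\alpha_j)(t-s)\le 4R_t$ for all $s$ simultaneously.
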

The equilibrium underlying Proposition \ref{prop:centralized} is
straightforward. Each player mines exclusively on its own chain, publishes each newly mined block immediately, and ignores all blocks mined by other players. Because player~$1$ has strictly greater mining power than every other player, its chain eventually becomes, and thereafter remains, longer
than every competing chain with probability one. The longest chain therefore
asymptotically consists entirely of blocks mined by player~$1$. No player $i>1$ can profitably deviate, since other players never extend a chain following a block mined by player $i$.

Bitcoin was designed as a decentralized system in which no single entity controls the production of blocks or, consequently, the ordering of transactions. Proposition \ref{prop:centralized} shows, however, that centralized consensus is itself an equilibrium outcome whenever the mining-power profile has a unique largest miner, which is generically the case. Importantly, this conclusion does not require the largest miner
to control a majority of aggregate mining power. Of course, this is not a realistic equilibrium, since the other miners have no reason to stay in the game, and the value of Bitcoin would probably collapse. 

When miner $1$ has mining power greater than $1/2$, the situation is more severe. In this case every equilibrium is centralized, with miner $1$ owning (almost) all blocks in the longest chain. 
\begin{proposition}\label{proposition:onlymonopoly}
If $\alpha_1>\frac12$ then player $1$ asymptotically owns all the blocks in the longest chain in every equilibrium. In particular, $u_1=1$ and $u_i=0$ for all $i>1$ in every equilibrium.
\end{proposition}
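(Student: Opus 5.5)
The plan is to show that in any equilibrium $\sigma$, player $1$ has a deviation guaranteeing $u_1 = 1$ almost surely. Since $U_1(\sigma)\ge U_1(\sigma_1',\sigma_{-i})$ and $u_1\le 1$, this forces $u_1=1$ almost surely under $\sigma$ itself. Because the shares $\frac{|C_t^\ell\cap M_t^i|}{|C_t^\ell|}$ sum over $i$ to $1$, we get $\sum_i u_i\le 1$ (the liminf of a sum dominates the sum of liminfs). Hence $u_i=0$ for all $i>1$. Almost-sure convergence of the own-share in the longest chains then gives ``asymptotically owns all blocks''.

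The deviation $\sigma_1'$ is the private-chain strategy underlying Proposition~\ref{prop:centralized}. Player $1$ ignores all other blocks. He always mines on the last block of his own chain, which starts from the genesis block (genesis owned by $i_0$ is a single block and irrelevant asymptotically), and he publishes each block immediately.

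\textbf{Step 1: player $1$'s chain eventually becomes and stays the unique longest public chain.} Let $L_t$ be the length of player $1$'s chain and $R_t$ the length of the longest public chain not extending it. Since only one block is mined per period, $R_t$ is at most $1$ plus the number of periods up to $t$ in which some player $j\neq 1$ was selected, whatever the other players do. Note that others' blocks may extend player $1$'s chain, but such chains contain player $1$'s blocks as a prefix; I handle them in Step 2. So $L_t - R_t \ge S_t - 1$, where $S_t$ is a random walk with step $+1$ w.p.\ $\alpha_1$ and $-1$ w.p.\ $1-\alpha_1$. Since $\alpha_1>\frac12$, $S_t\to\infty$ almost surely at linear rate $(2\alpha_1-1)t$. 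Therefore, from some random time on, every public longest chain either is player $1$'s chain or extends it. Any chain extending player $1$'s chain by $k$ blocks of others requires $k$ non-$1$ blocks, so its length is at most $L_t + (\#\text{others' blocks})$.

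\textbf{Step 2: the share bound.} This is the main obstacle, because others can append blocks onto player $1$'s published chain, so the longest chain need not be exactly his. My first attempt is to argue that any longest chain $C$ at time $t$ passes through a prefix of player $1$'s chain of length $\ell$, then continues with only non-$1$ blocks, since player $1$ never builds on others' blocks. The total length is then at most $\ell + (\text{number of non-}1\text{ blocks mined after the last time player }1\text{'s chain had length }\ell)$.

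I would strengthen the random-walk argument. For $C$ to be at least as long as player $1$'s own chain, the others must have out-mined player $1$ over the window since his chain had length $\ell$. So the suffix length $|C|-\ell$ is bounded by the maximal excursion $\max_{s\le t}(-(S_t-S_s))$ plus a constant. For a random walk with positive drift, this maximal backward excursion is $O(\log t)$ almost surely, by a Borel–Cantelli estimate on geometric tails $\left(\frac{1-\alpha_1}{\alpha_1}\right)^k$. Hence the fraction of player $1$'s blocks in every longest chain is at least $1-O(\log t)/t\to 1$. Averaging over ties preserves this bound, so $u_1=1$ almost surely under $(\sigma_1',\sigma_{-1})$, which completes the argument outlined in the first paragraph.
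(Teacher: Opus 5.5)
Your architecture is the paper's. Player $1$ deviates to extending only his own chain and publishing at once. Every longest chain is then a prefix of his chain followed by blocks of the others. And $u_1=1$ under the deviation forces $u_1=1$, hence $u_i=0$ for $i>1$, in any equilibrium.

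The gap is the key estimate in Step 2. Let $s$ be the period in which player $1$ mined the last of his blocks lying on a longest chain $C$. The others' suffix can be mined at any time after $s$, so your window must start at the \emph{first}, not the last, time his chain had length $\ell$. Write $N_1,N_{-1}$ for the block counts of player $1$ and of the others. Being longest gives $S_t\le S_s$, and the suffix satisfies $|C|-\ell\le N_{-1}(t)-N_{-1}(s)=(S_s-S_t)+(N_1(t)-N_1(s))$. The last term is not a constant, so your bound ``suffix $\le \max_{u\le t}(S_u-S_t)+O(1)$'' is false. Concretely, suppose the others and player $1$ alternate in periods $s+1,\dots,s+2k$. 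Then the others can stack $k$ blocks on player $1$'s block from period $s$ and tie his chain. Meanwhile the walk never falls more than one step below $S_s$, and the dip at time $t$ is $0$ if $S_s$ was a running maximum. The geometric tail $\bigl(\frac{1-\alpha_1}{\alpha_1}\bigr)^k$ controls how deep the walk dips, not how long it can go without progress, and only the latter bounds the suffix.

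The repair is to bound the length $t-s$ of a window ending at $t$ with nonpositive increment. The paper does this with the strong law. Set $\rho_t:=\max_{u\le t}|N_1(u)-\alpha_1u|=o(t)$. The inequality $N_1(t)\le N_1(s)+N_{-1}(t)-N_{-1}(s)$ then yields $(2\alpha_1-1)(t-s)\le 4\rho_t$. So any longest chain contains at most $t-s=o(t)$ foreign blocks while its length is at least $N_1(t)\sim\alpha_1 t$, which is all that is needed.

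If you want your $O(\log t)$ rate, use a Chernoff bound $\mathbb{P}(S_t-S_s\le 0)\le e^{-c(t-s)}$ and Borel--Cantelli over pairs with $t-s\ge C\log t$.

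Either argument also covers chains whose prefix is just the genesis block, so your Step 1 becomes unnecessary. As stated, Step 1's conclusion that a longest chain eventually ``is or extends'' player $1$'s chain is inaccurate anyway, since a longest chain may branch off an earlier block of his chain.
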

The idea is again simple: in every strategy profile, player $1$ can deviate and mine his own chain, ignoring all other blocks. This guarantees him the maximal utility $u_1=1$, and thus every equilibrium must satisfy $u_1=1$. This result is straightforward and motivates our focus on the regime $\alpha_i<\frac12$ in the remainder of the paper.

Proposition \ref{proposition:onlymonopoly} also provides a game-theoretic counterpart to the Byzantine fault tolerance (BFT) perspective common in computer science. The BFT perspective takes the behavior of honest miners as given and asks whether consensus is robust to arbitrary behavior by the remaining mining power; for Nakamoto consensus, this yields the familiar one-half threshold. Instead, in our analysis all behavior is endogenous. Proposition \ref{proposition:onlymonopoly} shows that the same threshold has an equilibrium interpretation: once a single miner controls more than half of the mining power, decentralized consensus cannot be sustained in equilibrium.

\section{Inertial Mining: A Decentralized Consensus Equilibrium}
We propose a novel mining strategy that we call \emph{inertial mining}. Motivated by Proposition \ref{proposition:onlymonopoly}, we restrict attention to the case of $\alpha_i < 1/2$ for all $i$. The inertial mining strategy, which is parametrized by $I \in \mathbb{N}$, is defined as follows. 

\begin{itemize}
\item \textbf{Single chain.} If $P_{t-1}$ consists of a single chain, choose $b_t^i$ to be the label of the last block in this chain. 
\item Otherwise, let $\tilde C_t^1,\ldots,\tilde C_t^n$ be the chains in $P_{t-1}$, ordered by increasing lengths $\ell_1 \leq \ell_2 \leq \cdots \leq \ell_n$. 
\begin{itemize}
    \item \textbf{$I$-switch.} If $\ell_n \geq \ell_{n-1} + I$, choose $b_t^i$ to be the label of the last block in $\tilde C_t^n$. 
    \item \textbf{Inertia.} Otherwise, 
    \begin{itemize}
        \item \textbf{Direct continuation.} If there is a unique chain containing $b_{t-1}^i$, choose $b_t^i$ to be the label of the last block in this chain.
        \item \textbf{Randomized fork resolution.} If there are multiple chains containing $b_{t-1}^i$, choose one of them at random using the public coordination device $\Xi_t$, as follows: start at block $b_{t-1}^i$, choose one of its children uniformly at random, and continue doing this until reaching a block with no children. Set $b_t^i$ to that block.
    \end{itemize}
    
\end{itemize}
\item All blocks are published as soon as they are mined.
\end{itemize}
\bigskip

A few notes on inertial mining and its relation to the standard Bitcoin protocol are in order:
\begin{enumerate}
    \item On path, all blocks are published immediately, and so behavior is identical to the one resulting from the standard Bitcoin mining protocol.
    \item The difference between inertial mining and standard mining occurs once a deviation from the protocol is observed. 
    Off path, standard mining always switches to a longest public chain. Inertial mining instead remains on the branch it mined on in the previous period unless a competing chain is at least $I$ blocks longer.
    \item The case $I=1$ is similar to the standard Bitcoin protocol: if there is a unique longest chain then players mine that chain. However, tie-breaking works differently if there is no unique longest chain.
    \item It is straightforward to see that if $I$ is large enough then selfish mining is not profitable. The challenge is to prove that no other deviation is profitable; this is non-trivial, since the strategy space is rather rich. The analysis is further complicated by the asymptotic nature of the utility, which implies that no one-shot deviation principle holds.
    \item It is important that when publishing at time $t$, $\Xi_{t+1}$ is not yet known, so that tie-breaking is uniformly random, conditioned on the information available to the players.
\end{enumerate}

\begin{claim}
    \label{clm:on-path-payoffs}
    Under inertial mining, the utility of each player $i$ is almost surely $\alpha_i$.
\end{claim}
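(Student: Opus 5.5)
We will show by induction on $t$ that when every player follows inertial mining, the public set $P_t$ is almost surely a single chain containing every mined block. The claim then follows from the strong law of large numbers.

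\textbf{Base case and induction step.} At $t=0$ the set $P_0=\{(0,0)\}$ is a single chain. Suppose $P_{t-1}$ is a single chain and $M_{t-1}=P_{t-1}$. By the single-chain rule, every player chooses $b_t^i$ to be the label of the last block of that chain. Nature selects some player $j$, who mines $m_t=(x,b_t^j)$. Since $x$ is drawn uniformly from $[0,1]$, it almost surely differs from all existing labels (countably many draws, each avoiding a finite set with probability one). So $m_t$ extends the chain by one block, and player $j$ publishes it immediately. Hence $P_t=M_t$ is again a single chain, now of length $t+1$. Taking a countable intersection over $t$ of these probability-one events, almost surely for every $t$ the longest chain $C_t$ is unique and equal to $M_t$, with $|C_t|=t+1$. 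The $I$-switch and inertia branches are never invoked on path, so the parameter $I$ plays no role here.

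\textbf{Counting blocks.} On this event,
\[
|C_t\cap M_t^i| = \mathds{1}\{i=i_0\} + \sum_{s=1}^t \mathds{1}\{\text{nature chooses } i \text{ at period } s\},
\]
where the first term accounts for the genesis block. The indicators are i.i.d.\ Bernoulli$(\alpha_i)$, because nature's choices are independent of the strategies, which on path do not influence who is selected. By the strong law of large numbers, $\frac{1}{t+1}|C_t\cap M_t^i|\to\alpha_i$ almost surely. The limit exists, so the $\liminf$ in the definition of $u_i$ equals $\alpha_i$.

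\textbf{Expected difficulty.} There is no real obstacle. The only point needing care is the almost-sure uniqueness of labels: it guarantees that no spurious fork or label collision occurs, so the chain structure in the definition of blockchains is preserved and $P_t$ really is a single chain.
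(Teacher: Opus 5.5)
Your proof is correct and takes the same approach as the paper. The paper simply observes that under inertial mining every mined block lies on a single public chain and then invokes the strong law of large numbers; your induction on $t$, together with the almost-sure distinctness of labels and the bookkeeping for the genesis block, spells out that one-line argument in detail.
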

Since under inertial mining there is a unique chain that contains every mined block, Claim~\ref{clm:on-path-payoffs} follows immediately from the strong law of large numbers. In particular, this implies that the outcome is decentralized and fair, in the sense that all miners participate and receive a share of blocks equal to their mining power.

Our main result is the following:
\begin{theorem}\label{thm:equilibrium}
Given a mining power profile $\alpha$ with $\max_i \alpha_i<1/2$, the inertial mining protocol is an equilibrium of the mining game for sufficiently large $I$. 
\end{theorem}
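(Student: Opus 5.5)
Fix a deviating player, say player $1$ with $\alpha:=\alpha_1<1/2$. All other players follow inertial mining with parameter $I$; we call them honest and set $\beta=1-\alpha>1/2$. By Claim~\ref{clm:on-path-payoffs} the equilibrium payoff is $\alpha$, so it suffices to show $U_1(\sigma_1',\sigma_{-1})\le\alpha$ for every strategy $\sigma_1'$. The overall approach is an accounting argument over blocks. Blocks mined by player $1$ are of two kinds. A block is \emph{honestly mined} if it extends the block honest miners are currently working on and is published immediately. Any other block is \emph{strategically mined}. Under inertia, the honest players always mine a common branch, since they share public information and the device $\Xi_t$. That branch only changes through an $I$-switch or a randomized fork resolution, and either event must be triggered by player $1$ publishing blocks off the honest tip.

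The plan is to assign to every strategic block $b$ a set $K(b)$ of honest blocks it ``kills''. These are the honest blocks that would otherwise have entered the longest chain but are orphaned when the honest miners switch, or when a fork is resolved, because $b$ was published. Charging would follow the order of the switch, with each orphaned honest block charged to the earliest unused strategic block of the overtaking branch. In the limit, the fraction of player $1$'s blocks in the longest chain is bounded by
\[
\frac{\alpha T - L_T}{T - L_T - H^{\mathrm{killed}}_T},
\]
where $L_T$ counts player $1$'s lost blocks and $H^{\mathrm{killed}}_T=\sum_b |K(b)|$. So it suffices to show that, on average, each strategic block that survives kills at most about $\beta/\alpha$ times the expected number of player $1$'s blocks lost. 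Equivalently, for each strategic block we want $\E{|K(b)|}\le \frac{\beta}{\alpha}\,\Pr{b \text{ not in final chain}}$ plus a correction that vanishes as $I\to\infty$. The inequality should be in a form that survives the $\liminf$ via a martingale or strong-law argument applied to the sums.

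The core estimate is a gambler's-ruin bound. For $b$ to kill $k$ honest blocks, the branch containing $b$ must, after forking from the honest branch, reach a lead of $I$ over it. That requires player $1$ to outmine the honest miners by $I$ net blocks on that branch, at least during the time the honest blocks counted in $K(b)$ accumulate. Since the honest drift is $\beta-\alpha>0$, the probability of ever gaining $I$ net blocks is $(\alpha/\beta)^I$. With a plain random-walk bound, the expected number of honest blocks killed per overtaking event therefore has finite mean, of order $I/(\beta-\alpha)$. It must also be balanced against the certain loss of the strategic blocks on every failed attempt. The comparison should show that expected gain per strategic block is at most $C I(\alpha/\beta)^I$ honest kills, while expected loss is bounded below by a constant, namely the probability that the fork fails, which is at least $1-(\alpha/\beta)$. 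For large $I$ the loss dominates.

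Forks that do not reach length difference $I$ are resolved by the uniform random walk down children via $\Xi_t$. Publishing a tying fork only yields a fair coin flip, so each tie-based kill is offset symmetrically by an equally likely loss of player $1$'s own blocks. The timing requirement that $\Xi_{t+1}$ be unknown at publication time is exactly what makes this martingale-fair.

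The main obstacle is making the charging scheme rigorous against arbitrary strategies. Player $1$ may build many interleaved private branches, reuse the same strategic blocks across several switches, or exploit the inertia rule, since honest miners stay on their previous branch even when it is shorter than a public alternative. This could let player $1$ park honest miners on a losing branch. I would first try to prove a structural lemma: at every time, honest blocks lie on a single branch except for forks created by player $1$'s publications, and every orphaned honest block has a uniquely defined earliest strategic killer. I would then bound the killer counts through a supermartingale built from $(\alpha/\beta)^{\text{lead}}$ for each live private branch.

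The asymptotic $\liminf$ utility rules out a one-shot deviation principle. I would therefore finish by showing that the potential
\[
\Phi_T=\text{(player 1 blocks in chain)}-\alpha\cdot\text{(chain length)}
\]
has increments whose conditional expectations are nonpositive up to a term summable after normalization by $T$. A strong law for martingale differences with bounded second moments then gives $u_1\le\alpha$ almost surely.
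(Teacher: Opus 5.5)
\textbf{Where the proposal stands.} Your reduction is the paper's. The per-block target $\alpha\,\mathbb{E}|K(b)|\le(1-\alpha)\,\mathbb{P}[b\notin C_T]$ is exactly Proposition~\ref{prop:QR}, $\mathbb{E}[\alpha k_t+(1-\alpha)\ell_t\mid H_t^i]\le\alpha(1-\alpha)+r_{T-t}$, rewritten for a block mined by the deviator. But proving that inequality against arbitrary strategies is the core of the argument, and the reasoning you sketch for it fails.

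\textbf{First gap: the per-block bound.}
\begin{itemize}
\item Not every strategic block carries a constant risk of loss. Suppose player $1$ mines two consecutive blocks on the honest target and publishes both together. The second is strategic, since its parent is not the honest target, yet honest miners adopt it by direct continuation with certainty.
\item Kills do not require an $I$-lead. A withheld block released after an honest sibling appears goes through the child-walk and kills that sibling with probability $1/2$. So your gain bound $CI(\alpha/\beta)^I$ is false, and its loss probability, $1/2$, is below $1-\alpha/\beta$ whenever $\alpha<1/3$.
\item The paper gets its slack $(1-\alpha)(\tfrac12-\alpha)$ from a case split for a block that is able to kill. Either it is adopted uncontested, which for a withheld block has probability at most $\alpha$ and then kills nothing. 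Or it goes through a child-walk, which it wins with probability at most $1/2$. Or it needs an $I$-switch, with probability at most $(\alpha/(1-\alpha))^{I-J-1}$. This yields $(1-\alpha)(\alpha+\tfrac12)$ plus small terms.
\item That split works only because of the specific killer assignment. The killer is the cousin if it lies within $J$ of the initial dishonest ancestor, and otherwise that initial block. This caps same-depth kills at one per block and puts all far kills on the initial block, where their expected number is a random-walk tail $r'_J$.
\item Your ``earliest unused strategic block'' rule comes with no comparable control. It also ignores honest blocks that are outside $C_T$ only because a published, longer, not-yet-adopted branch exists. This is the paper's event $\mathcal E$, which produces the summable term $r_{T-t}$. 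Only summable errors of that kind are affordable; $I$-dependent errors must be absorbed by strict slack.
\end{itemize}

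\textbf{Second gap: the finish.} $\Phi_T$ is not a supermartingale for player $1$'s information, even approximately.
\begin{itemize}
\item While player $1$ withholds a branch, $\Phi_T$ drifts down.
\item When he publishes a branch longer than the honest one, $C_T$ flips and $\Phi_T$ jumps up by an amount he knows in advance. Take the strategy that withholds until reaching a lead of $I$. In a state with lead $I-1$ over $n$ honest blocks, the one-step conditional drift is $\alpha\bigl(n+(1-\alpha)(I-1)\bigr)>0$.
\item Gains and losses balance only over a whole episode and only in expectation, and the increments are unbounded, so the martingale strong law you invoke does not apply.
\end{itemize}
An almost-sure route would spare you the paper's restart step. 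But it needs a pathwise supermartingale, such as your $(\alpha/\beta)^{\mathrm{lead}}$ potential over live branches, which you do not construct. For arbitrary strategies, that construction is precisely the hard part.

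The paper stays in expectation instead. It bounds $\mathbb{E}[S_t\mid H_t^i]$ for quantities that depend on the future chain $C_T$, so no martingale structure is needed, and sums to get $\liminf_T\mathbb{E}[\alpha Y_T-X_T]\ge 0$. It then uses two ingredients absent from your plan:
\begin{itemize}
\item the ratio lemma, which shows this expectation bound rules out $\liminf_T X_T/Y_T>\alpha$ almost surely;
\item Claim~\ref{clm:as}, which upgrades any deviation with $\mathbb{E}[u_i]>\alpha$ to one with $u_i>\alpha$ almost surely, because the payoff ignores finite prefixes.
\end{itemize}
Without such a bridge, per-block expectation bounds do not bound the expected $\liminf$ of the ratio.
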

Theorem \ref{thm:equilibrium} establishes that fair decentralized consensus can indeed be achieved as an equilibrium outcome in Bitcoin mining:  for any desired threshold of the maximal mining power, there exists a corresponding equilibrium parametrized by $I$ that achieves fair decentralized consensus. We leave for future work an explicit calculation of how large $I$ needs to be, but conjecture that this is small enough to be practical, assuming $\max_i\alpha_i$ is not extremely close to one half. We note that even for small $\max_i\alpha_i$ it is far from obvious that the standard Bitcoin protocol is an equilibrium of our mining game.\footnote{As mentioned above, this is known for the complete information model of \cite{kiayias2016blockchain}.} The remainder of this section is devoted to the key constructions and an outline of the proof.

Suppose all players are playing according to the inertial mining protocol, except perhaps player $i$ who is using some other strategy. We prove that player $i$'s utility is at most $\alpha_i$, which by Claim~\ref{clm:on-path-payoffs} shows that there is no profitable deviation. Without loss of generality we assume that player $i$'s strategy is not randomized, since if there is a mixed profitable deviation then there is also a pure one. We henceforth fix the strategies of all the players, including the deviating player $i$, and calculate probabilities and expectations with respect to the distribution over outcomes generated by these strategies, the randomness in the mining process, and the randomness in the tie-breaking correlation device.

Since all players other than $i$ behave identically, we will assume that there is only one player other than $i$, and denote this player $j$. This is done for notational convenience only. The mining power of this player is $1-\alpha_i$. Recall that $b_t^j$ is the block to which player $j$ is trying to mine a successor at time $t$. With more than two players, all players different from $i$ choose the same $b_t^j$, and so there is no loss of generality in assuming that there are only two players.

Denote by $H_t^i=(\Xi_1,\ldots,\Xi_{t},P_1,\ldots,P_{t-1},M_1^i,M_2^i,\ldots,M_{t-1}^i)$ the history observed by player $i$ at the beginning of period $t$. 

We say $m_t$, the block mined at time $t$, is \emph{dishonestly mined} if either it is not published at time $t$, or its predecessor is not $b_t^j$. Otherwise, we say $m_t$ is \emph{honestly mined}. Note that all blocks mined by $j$ are honestly mined. We say a dishonestly mined block is an \emph{initial dishonestly mined} (henceforth, initial) block if its predecessor is honestly mined.  

The \emph{depth} $\Delta(x,y)$ of a block $(x,y)$ that is a descendant of the genesis block $(0,0)$ is the length of the chain starting at $(0,0)$ and ending at $(x,y)$. It follows from the definition of inertial mining that $\Delta(b_1^j) \leq \Delta(b_2^j) \leq \cdots$, and that in periods in which $j$ mines a block this inequality is strict. Of course, it is also strict if $i$ mines honestly, and in some other cases. The following claim is an immediate consequence of the definitions, and is important in understanding the dynamics that result from  our assumption that player $j$ follows the protocol.
\begin{claim}
\label{clm:move-forward}
    Suppose that the block $m_t$ was honestly mined. Then $\Delta(b_{t+1}^j) > \Delta(b_t^j)$. It follows that blocks mined by $j$ have distinct depths. 
\end{claim}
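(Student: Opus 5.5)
The plan is to unpack the inertial mining rule at time $t+1$, given that $m_t$ was honestly mined, and then derive the second statement from the first.

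\textbf{Main step.} Suppose $m_t$ is honestly mined. By definition its predecessor is $b_t^j$ and it is published at time $t$, so $m_t\in P_t$ and $\Delta(m_t)=\Delta(b_t^j)+1$. I will show that player $j$'s rule selects $b_{t+1}^j$ to be a descendant of $m_t$ (possibly $m_t$ itself). It then follows that $\Delta(b_{t+1}^j)\ge \Delta(m_t)>\Delta(b_t^j)$. I go through the cases of the rule applied to $P_t$.

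\emph{Single chain.} If $P_t$ is a single chain, it contains $m_t$, so its last block is a descendant of $m_t$ (or $m_t$ itself).

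\emph{$I$-switch.} Suppose the longest chain $\tilde C^n$ is at least $I$ longer than every other chain. We need the fact, which should be stated as a small auxiliary observation, that $b_t^j$ is a leaf of $P_{t-1}$, or at least that its public descendants were already accounted for. One way to see this: every choice made by the rule (last block of a chain, or a block with no children) is a leaf of the public tree at time $t-1$. Blocks published during period $t$ by player $i$ can attach below $b_t^j$, so $b_t^j$ need not be a leaf of $P_t$, but in that case $m_t$ is simply one of several children. So the $I$-switch case needs a separate argument. If $\tilde C^n$ does not contain $m_t$, consider the chain through $m_t$. Before period $t$, the inertia and switch rules maintain that $j$'s target chain is within $I$ of the longest chain, or else $j$ would already have switched. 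This invariant has to be checked by induction, and I expect it to be the main obstacle. The case where the $I$-switch fires due to blocks newly published by $i$ is genuinely possible, so the cleaner route may be to weaken the claim to depth monotonicity with strictness coming from $\Delta(\text{last block of }\tilde C^n)\ge \Delta(b_t^j)+I$. Indeed, since $\tilde C^n$ is at least $I$ longer than the chain through $m_t$, which has length at least $\Delta(b_t^j)+1$, its last block has depth greater than $\Delta(b_t^j)$. So the inequality holds in this case too, and no invariant is needed.

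\emph{Direct continuation.} If exactly one chain contains $b_t^j$, that chain contains $m_t$, since $m_t$ is a child of $b_t^j$ in $P_t$. So its last block is a descendant of $m_t$.

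\emph{Randomized fork resolution.} The walk starts at $b_t^j$ and moves to a child at each step, ending at a leaf. Every leaf reached this way has depth at least $\Delta(b_t^j)+1$, because $b_t^j$ has at least one child, namely $m_t$, and so is not itself a leaf.

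In all cases $\Delta(b_{t+1}^j)>\Delta(b_t^j)$.

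\textbf{Consequence.} Every block mined by $j$ is honestly mined. If $j$ mines at times $s<t$, then $\Delta(m_s)=\Delta(b_s^j)+1\le \Delta(b_{s+1}^j)$, using the first part. By the monotonicity $\Delta(b_1^j)\le\Delta(b_2^j)\le\cdots$ noted just before the claim, this gives $\Delta(b_{s+1}^j)\le\Delta(b_t^j)<\Delta(m_t)$. Hence the blocks mined by $j$ have distinct depths.
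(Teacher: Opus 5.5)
Your proof is correct. It is the case-by-case unpacking of the inertial-mining rule, together with the monotonicity $\Delta(b_1^j)\le\Delta(b_2^j)\le\cdots$, which the paper leaves implicit when it calls the claim an immediate consequence of the definitions.

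You can delete two things, since your case arguments use only that $b_t^j$ has the public child $m_t$ (so it is not a leaf of $P_t$) and that an $I$-switch targets a longest chain, whose length is at least $\Delta(m_t)$. The first is the opening plan that $b_{t+1}^j$ is always a descendant of $m_t$. That is false under an $I$-switch, and also under fork resolution when $i$ publishes a sibling of $m_t$ in period $t$. The second is the speculative invariant, which is also false: $i$ can publish two equally long chains far ahead of $j$'s branch without triggering a switch.
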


Fix a time $T$. We will use lower-case $t$ to denote times up to and including time $T$. Let $C_T^1,\ldots,C_T^k \subseteq P_T$ be the longest chains in $P_T$. Denote by $C_T$ a uniformly random chain chosen from these $k$ chains.  Denote by $K^{j}_T = M^j_T \setminus C_T$ the set of \emph{killed blocks} at time $T$. These are the blocks of player $j$ that are not in the longest chain by time $T$. Since player $j$ follows the inertial mining protocol, he will always add blocks to the tree originating in the genesis block $(0,0)$. Hence each block $(x,y) \in K^{j}_T$ will have at least one ancestor in $C_T$. We denote by $a(x,y) \in C_T$ the closest ancestor of $(x,y)$ that is in $C_T$ (see Figure~\ref{fig:ancestor-cousin}). We note that a killed block at time $T$ may be included in $C_{T'}$ for some $T'>T$, so the notion of killing is time-dependent. 

Given a killed block $(x,y)$ that is a descendant of the genesis block, denote by $\cus(x,y)$ the block $(w,z)$ in $C_T$ satisfying $\Delta(x,y)=\Delta(w,z)$ (see Figure~\ref{fig:ancestor-cousin}).
\begin{claim}
  Let $(x,y) \in K^j_T$ be a killed block. The block $\cus(x,y)$  is dishonestly mined by $i$. 
\end{claim}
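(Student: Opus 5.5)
The plan is to argue by contradiction, using only Claim~\ref{clm:move-forward} and the monotonicity of $\Delta(b_t^j)$. Let $(x,y)\in K^j_T$ have depth $d=\Delta(x,y)$. The first step is to check that $\cus(x,y)$ is well defined. The block $(x,y)$ is mined by $j$, who publishes immediately, so it lies in $P_T$. The chain from the genesis block to $(x,y)$ is therefore a public chain of length $d$. Since $C_T$ is a longest chain in $P_T$, it has length at least $d$ and so contains a unique block $(w,z)=\cus(x,y)$ at depth $d$. Moreover $(w,z)\neq(x,y)$, because $(x,y)\notin C_T$. Since $(x,y)$ was mined by a player onto an existing block, $d\ge 2$, so $(w,z)$ is not the genesis block and was mined at some period.

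Next I would show that two distinct honestly mined blocks cannot have the same depth. This is the statement of Claim~\ref{clm:move-forward} extended from $j$'s blocks to all honest blocks, and the argument is the same.
\begin{itemize}
\item An honestly mined block $m_s$ has predecessor $b_s^j$, so $\Delta(m_s)=\Delta(b_s^j)+1$.
\item Suppose $m_s$ and $m_t$ are both honest with $s<t$; they differ because exactly one block is mined per period.
\item By Claim~\ref{clm:move-forward} and monotonicity of $\Delta(b^j_\cdot)$, we get $\Delta(b_t^j)\ge\Delta(b_{s+1}^j)>\Delta(b_s^j)$.
\item Hence $\Delta(m_t)>\Delta(m_s)$.
\end{itemize}

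Now apply this to our two blocks. Every block mined by $j$ is honestly mined, so $(x,y)$ is honest. If $(w,z)$ were honestly mined, then the two distinct honest blocks $(x,y)$ and $(w,z)$ would have the same depth $d$, which contradicts the previous step. So $(w,z)$ is dishonestly mined. Since all of $j$'s blocks are honest, the miner of $(w,z)$ must be $i$. This proves the claim.

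I do not expect a real obstacle here. The only care needed is in the bookkeeping: checking that $C_T$ actually reaches depth $d$, and that the monotonicity of $\Delta(b_t^j)$ is applied across the whole interval $[s+1,t]$, not just at one step.
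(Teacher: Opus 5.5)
Your proof is correct and follows the paper's own argument: an honestly mined block has depth $\Delta(b_s^j)+1$, so by Claim~\ref{clm:move-forward} and the monotonicity of $\Delta(b^j_\cdot)$ no two honest blocks share a depth, and since the killed block is honest its cousin must be dishonest and hence mined by $i$. You also check that $C_T$ reaches depth $\Delta(x,y)$, so that $\cus(x,y)$ is well defined and distinct from $(x,y)$; the paper leaves this implicit.
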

\begin{proof}
Note that if $m_t$ is honestly mined then $\Delta(m_t) = \Delta(b^j_t)+1$, regardless of whether it was mined by $i$ or $j$. It thus follows from Claim~\ref{clm:move-forward} that any two honestly mined blocks cannot have the same depth. Since every block in $K^j_T$ is honestly mined, it must be that $\cus(x,y)$ is dishonestly mined.
\end{proof}

Fix a parameter $I$, and also fix $J \in \{1,\ldots,I-1\}$. We will show that the conclusion of the theorem holds if $J$ is large enough, and $I-J$ is also large enough.

\begin{figure}[t]
\centering
\begin{tikzpicture}[x=1.2cm,y=0.95cm,every node/.style={font=\small}]

\tikzstyle{main}=[circle,draw=black,fill=white,inner sep=1.7pt]
\tikzstyle{killed}=[circle,draw=red!70!black,fill=red!20,inner sep=1.7pt]

\node[main,label=right:{$(0,0)$}] (g) at (0,0) {};
\node[main] (c1) at (0,1) {};
\node[main,label=right:{$a(x_3,y_3)$}] (a) at (0,2) {};
\node[main,label=right:{$(w,z)$}] (c3) at (0,3) {};
\node[main,label=right:{$\cus(x_1,y_1)$}] (c4) at (0,4) {};
\node[main,label=right:{$\cus(x_2,y_2)$}] (cus) at (0,5) {};
\node[main,label=right:{$\cus(x_3,y_3)$}] (c5) at (0,6) {};
\node[main] (c6) at (0,7) {};

\draw[thick] (g)--(c1)--(a)--(c3)--(c4)--(cus)--(c5)--(c6);

\node[right] at (-0.2,7.5) {$C_T$};

\node[killed,label=left:{$(x_0,y_0)$}] (b1) at (-1.4,3) {};
\node[killed,label=left:{$(x_1,y_1)$}] (b2) at (-1.4,4) {};
\node[killed,label=left:{$(x_2,y_2)$}] (x) at (-1.4,5) {};
\node[killed,label=left:{$(x_3,y_3)$}] (x1) at (-1.4,6) {};

\draw[thick] (a)--(b1)--(b2)--(x)--(x1);

\draw[dashed,gray] (b2)--(c4);
\draw[dashed,gray] (x)--(c3);
\draw[dashed,gray] (x1)--(c3);
\draw[dashed,gray] (b1)--(c3);

\end{tikzpicture}
\caption{Illustration of the definitions of $a(x,y)$, $\cus(x,y)$ and $\kill(x,y)$. The block $a(x_3,y_3)$ is the closest ancestor of $(x_3,y_3)$ that lies on the longest chain $C_T$. The block $\cus(x_i,y_i)$ is the unique block on $C_T$  with the same depth as $(x_i,y_i)$. Suppose $J=1$. Then $\kill(x_1,y_1)=\cus(x_1,y_1)$ and $(w,z)$ is the killer of $(x_0,y_0)$, $(x_2,y_2)$ and $(x_3,y_3)$.}
\label{fig:ancestor-cousin}
\end{figure}
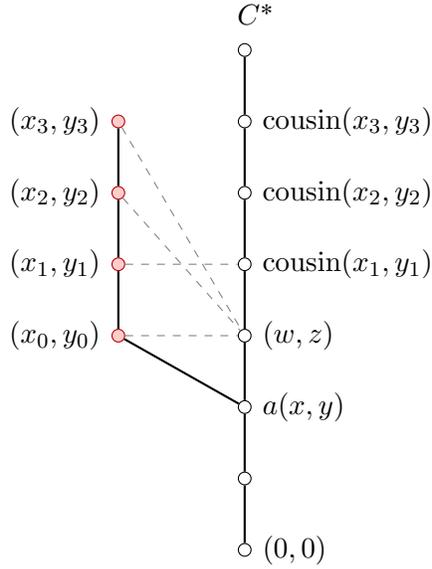

Suppose $(x,y) \in K^j_T$ is a killed block.  Let $(w,z)$ be the ancestor block of $\cus(x,y)$ such that all the blocks between $(w,z)$ and $\cus(x,y)$ (including $(w,z)$ itself) are dishonestly mined, yet the immediate predecessor of $(w,z)$ is honestly mined. Then $(w,z)$ must be an initial dishonestly mined block. We define the \emph{killer} of the killed block $(x,y)$ as follows. If $\Delta(\cus(x,y))\leq \Delta(w,z)+J$, then we set $\kill(x,y) = \cus(x,y)$ to be the killer of $(x,y)$. Otherwise, we set $\kill(x,y) = (w,z)$. See Figure~\ref{fig:ancestor-cousin}. 

The definition of $\kill(x,y)$ is the key to our proof: it sets up an accounting system that attributes every killed block to some dishonestly mined killer. The proof then shows that no block can be expected to kill enough blocks to make dishonest mining worthwhile. 

Given $t \leq T$, let $k_t$ be the number of blocks killed up to time $T$ by $m_t$, the block mined at time $t$:
\begin{align*}
    k_t = |\{(x,y) \in K^j_T \,:\, \kill(x,y) = m_t\}|.
\end{align*}

Let $\ell_t = 1_{m_t \in M^i_t \cap C_T}$ be the indicator of the event that the block mined at time $t$ will belong to $i$ and is included in the selected longest chain $C_T$. 

Define the random variable 
\begin{equation*}
S_t =\alpha_i k_t +(1-\alpha_i )\ell_t,
\end{equation*}
which we call the \emph{impact} of the block generated at time $t$. This turns out to be a useful metric of this block's usefulness to player $i$, capturing both the block's contribution to $i$'s chain, and the damage it does to $j$'s chain.

The next proposition is the key technical component in the proof of Theorem~\ref{thm:equilibrium}.
\begin{proposition}\label{prop:QR}
If $J$ and $I-J$ are sufficiently large, then there exists a nonnegative
summable sequence $(r_n)_{n\geq0}$ such that for every $T$ and
every $t\leq T$,
\begin{equation*}
\E{S_t}{H_{t}^{i}}\leq \alpha_i
(1-\alpha_i )+r_{T-t}
\end{equation*}
almost surely.
\end{proposition}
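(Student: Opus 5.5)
We split on who mines at time $t$ and on whether $m_t$ is honest. If $j$ mines at $t$ (probability $1-\alpha_i$), then $m_t$ is honest and owned by $j$. Such a block can never be a killer, because killers are cousins of killed blocks or initial blocks, and by the claim preceding the definition these are dishonestly mined by $i$. Hence $S_t=0$. If $i$ mines at $t$ (probability $\alpha_i$), the block is placed according to $b_t^i$ and the publication decision, both functions of $H_t^i$ together with the mining event. So we may condition on the type of block $i$ produces and bound $\E{S_t}{H_t^i, i\text{ mines at }t}$ by $1-\alpha_i+r_{T-t}/\alpha_i$. Multiplying by $\alpha_i$ then gives the statement.

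\emph{Case 1: $m_t$ honestly mined by $i$.} We claim $k_t=0$. A killer is either $\cus(x,y)$, which is dishonest, or an initial block, which is dishonest by definition. Since $\ell_t\le 1$, we get $S_t\le 1-\alpha_i$.

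\emph{Case 2: $m_t$ dishonest, with initial block $(w,z)$ heading the dishonest run containing $m_t$ and $d=\Delta(m_t)-\Delta(w,z)\le J$.} Here $m_t$ can kill only the $j$-block at its own depth, which is unique by Claim~\ref{clm:move-forward}. Blocks at larger depth beyond $J$ are charged to the initial block instead. So $k_t\le 1$, and moreover $k_t=1$ forces $\ell_t=1$ when $m_t$ is its cousin on $C_T$. Thus $S_t\le \alpha_i+(1-\alpha_i)\ell_t$, and if $\ell_t=1$ then $S_t\le 1$. We must show that the probability $p$ that $m_t$ ends in $C_T$ satisfies $p\le 1-\alpha_i+r_{T-t}$.

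The intuition is as follows. $m_t$ sits on a branch off $j$'s chain. Every period that $j$ mines on its own branch, $j$ gains a block there; $j$ switches only on an $I$-lead, which is at least $I-J$ blocks. The gain from killing ($\alpha_i$) is thus offset by the risk of losing. The chain containing $m_t$ must eventually beat $j$'s branch. Consider $j$'s next mining event relative to $i$'s: if $j$ mines before $i$ does, then $j$'s branch grows past depth $\Delta(m_t)$ while $m_t$'s branch has not, since all of $i$'s power is in one place. Then $i$ must win a race from behind needing roughly $I$ lead. Concretely I would compute $\E{S_t}{\cdot}\le \alpha_i\cdot 1+(1-\alpha_i)\cdot(\alpha_i+\text{ruin prob})$, using the gambler's-ruin bound $(\alpha_i/(1-\alpha_i))^{I-J}$ for $i$ overcoming a deficit. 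The case split needs care: if $m_t$ eventually lies in $C_T$ with $k_t=0$, then $S_t=1-\alpha_i$; otherwise $S_t\le\alpha_i k_t$. The finite-horizon correction, where $C_T$ is still unresolved near $T$, is exactly what produces $r_{T-t}$. It decays geometrically in $T-t$ by the same random-walk drift $1-2\alpha_i>0$, and is therefore summable.

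\emph{Case 3: $m_t$ is itself initial.} It can kill its cousin and all $j$-blocks at depths more than $J$ beyond it on the competing branch. This is the main obstacle: $k_t$ is unbounded, so I would bound $\E{k_t 1_{\cdot}}$ directly. For the killed $j$-blocks at depth beyond $\Delta(w,z)+J$ to exist while $i$'s run ultimately wins, $j$ must have mined at least $J$ blocks on its branch, which it keeps following under inertia. The dishonest run must nonetheless eventually exceed it. Since $i$ mines with drift against it ($\alpha_i<1/2$), the probability that $i$'s run catches up from a deficit of $n$ is at most $(\alpha_i/(1-\alpha_i))^{n}$. Summing $n\cdot(\alpha_i/(1-\alpha_i))^{n}$ over $n\ge J$ gives a tail that is small for large $J$. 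I would then show that $\E{S_t}{\cdot}\le (1-\alpha_i)p+\alpha_i(p+\eps_J)$, where $p\le 1-\alpha_i+\eps'$ comes from Case 2's first-step argument, and verify this is at most $1-\alpha_i$. That verification requires a strict slack: with probability at least $1-\alpha_i$, $j$ mines next and the honest chain goes ahead, killing $m_t$ unless $i$ wins a race. This yields $p\le \alpha_i+(1-\alpha_i)\rho$ with $\rho$ small. The slack $(1-\alpha_i)-\alpha_i(\cdots)>0$ uses $\alpha_i<1/2$ and absorbs $\eps_J$ once $J$ is large and $I-J$ is large.
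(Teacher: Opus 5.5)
There is a genuine gap in how you handle survival and same-depth killing (Cases 2 and 3). Your Case 2 reduction, that it suffices to show $p=\mathbb{P}(m_t\in C_T\mid H_t^i,\ i\text{ mines at }t)\le 1-\alpha_i+r_{T-t}$, is false for $\alpha_i>1/3$. Your Case 3 claim also fails: you assert that once $j$ mines next, $m_t$ survives only if $i$ wins a race of length about $I$, so that $\rho$ is small. This ignores randomized fork resolution.

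Here is a counterexample. Take a history whose public tree is a single chain, and let $i$ mine $m_t$ on $b_t^j$ and withhold it.
\begin{itemize}
\item If $i$ also mines at $t+1$, he extends $m_t$ and publishes both blocks. Then $j$ follows by the single-chain rule, $m_t\in C_T$, and $k_t=0$.
\item If $j$ mines at $t+1$, it creates a sibling of $m_t$, and $i$ then publishes $m_t$. At $t+2$, $j$'s previous target $b_{t+1}^j=b_t^j$ has two children. Player $j$ picks one uniformly via $\Xi_{t+2}$, which $i$ did not know when publishing, so $j$ adopts $m_t$ with probability $1/2$. Thereafter $i$ mines honestly.
\end{itemize}
For a non-initial $m_t$ with $d=1$, let $m_t$ extend a withheld initial block $w$ that $i$ publishes at the end of period $t$. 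The same play then applies with $b_{t+1}^j=w$.

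In this example, $p\to\alpha_i+\frac{1-\alpha_i}{2}=\frac{1+\alpha_i}{2}$, which exceeds $1-\alpha_i$ whenever $\alpha_i>1/3$. So $\rho=1/2$ is not small, and your final bound $(1-\alpha_i)p+\alpha_i(p+\eps_J)\ge p$ cannot be brought below $1-\alpha_i$. The proposition itself survives the example: the actual expected impact is $\alpha_i(1-\alpha_i)+\frac{1-\alpha_i}{2}<1-\alpha_i$. The problem is that your accounting charges the cousin kill $\alpha_i$ on every survival, which is too coarse.

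The missing idea is that the cousin kill is correlated with the way $m_t$ survives. The paper classifies outcomes by how $j$ \emph{first} comes to mine on a descendant of $m_t$:
\begin{itemize}
\item \textbf{Direct continuation.} This has probability at most $\alpha_i$, since a $j$-block at $t+1$ already forces a fork. In this case no $j$-block sits at depth $\Delta(m_t)$, so there is no cousin kill and the local impact is at most $1-\alpha_i$.
\item \textbf{Randomized fork resolution.} This is won with probability at most $1/2$, with impact at most $1$.
\item \textbf{$I$-switch.} This has probability at most $(\alpha_i/(1-\alpha_i))^{I-J-1}$, using $\Delta(m_t)-\Delta(b_t^j)\le J+1$.
\item \textbf{No adoption by time $T$, yet $m_t\in C_T$.} This has probability at most $r_{T-t}$.
\end{itemize}
Since $1-\alpha_i>1/2$, the worst case is when direct continuation has probability exactly $\alpha_i$. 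This gives $(1-\alpha_i)\alpha_i+\frac12(1-\alpha_i)=(1-\alpha_i)(\alpha_i+\frac12)$. The remaining slack $(1-\alpha_i)(\frac12-\alpha_i)>0$ absorbs the far-killing term and the $I$-switch term, and this is exactly where the hypothesis $\alpha_i<1/2$ (rather than $1/3$) is used.

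Your far-killing estimate is close in spirit to the paper's, but it should be phrased differently. Let $Z$ be the walk counting $i$-blocks minus $j$-blocks after $t$. Killing the $j$-block at depth $\Delta(m_t)+k$ forces $1+Z_n\ge 0$ for some $n\ge k$. It is not a catch-up from a ``deficit of $n$'', since $i$ may have been mining privately in parallel.
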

That is, given any history, the expected impact is at most $\alpha_i(1-\alpha_i)$ plus some small term.  

This proposition illustrates a tradeoff between killing others' blocks and retaining one's own blocks in the longest chain. A strategy that attempts to exclude others' blocks from the longest chain must carry a nontrivial risk of player $i$'s own 
blocks being excluded instead. In the proof of Theorem~\ref{thm:equilibrium} we use this to show that, in expectation, player $i$ cannot increase the fraction of his blocks in the longest chain no matter what he does. From this, we derive the conclusion of the theorem, namely that there are no profitable deviations.

\section{Discussion and Conclusion}
In this paper, we revisit a paradox:  Bitcoin's core distinguishing feature is that it is decentralized, yet the Bitcoin protocol---the mining strategy proposed by Satoshi Nakamoto and adopted by Bitcoin miners---is subject to profitable deviations \citep{eyal2018majority}. Importantly, the existence and nature of decentralized equilibria remained an open question.  This is not purely a theoretical concern: \cite{li2024statistical} provide empirical evidence consistent with selfish mining in several proof-of-work blockchains, most prominently in Monacoin and Bitcoin Cash. The lack of a known equilibrium for proof-of-work blockchains that generates a single longest chain on the equilibrium path that satisfies fair decentralized consensus has been highlighted by Ethereum's founder  \citep{buterin2017posfaq} and by \cite{hall2024study} in the context of Ethereum's transition from proof-of-work to proof-of-stake. 

Our main contribution is to resolve this paradox by introducing inertial mining and establishing that, for a sufficiently large $I$, it constitutes an equilibrium whenever every miner controls less than one half of the mining power. Moreover, inertial mining satisfies fair decentralized consensus, i.e., every miner's share of blocks coincides with his mining power, and it results in a single chain.  This resolves a central vulnerability of Bitcoin’s consensus mechanism: the existence of selfish mining, a profitable deviation from the standard Bitcoin mining protocol.


Our model of Bitcoin mining is highly stylized and simplified. A number of features can be added to make it more realistic. First, instead of having  exactly one block mined in each period, one could fix a small $\eps>0$, and have each player $i$ independently mine a block with probability $\eps\alpha_i$. This would correspond to each period representing a very short amount of time, with a small probability of mining in each period. We believe that our results would apply equally under this modification, at the cost of making the model less tractable and the proofs even more technical.

Another natural modification would be to introduce network outages. In a network outage, the group of miners is split into two, and communication between the groups is stopped for some random amount of time, so that blocks published by members of a group are only observed in that group. Network outages do occur and result in naturally occurring forks that are not a result of deviations from the protocol. The Bitcoin protocol was designed with the intention of being robust to them, and we believe that inertial mining can likewise be robust to network outages.  Nevertheless, adding network outages is a more significant change to the model, and furthermore a change that can be implemented in many ways. We again believe that the spirit of our results should survive this type of alteration.

A practical concern about inertial mining is that most network outages are short and result in forks of length one, and resolving them by waiting for a lead of $I$ is inefficient. One could modify inertial mining to allow for random tie-breaking resolution in this case of a length-one fork, instead of waiting for a lead of length $I$. We believe that our proof applies equally to this variant.


As a direction for future research, we note that proof-of-stake chains suffer from incentive problems that are similar to the selfish-mining deviations of Bitcoin \citep{brown2019formal}. It would be interesting to understand if a similar approach can be applied there to design equilibrium protocols. 





\bibliography{refs}

\appendix

\section{Appendix: Missing Proofs}
\subsection{Proof of Proposition~\ref{prop:centralized}}
\begin{proof}
For each player $i$, consider the following \emph{independent strategy}: if player $i$ previously mined a block, $M_t^i\ne \emptyset$ he appends to the last of his mined blocks and immediately publishes a block. If player $i$ has not previously mined a block, $M_t^i = \emptyset$, he appends to the genesis block $(0,0)$. Fix an player $i>1$ and assume all other players follow the independent strategy. For any deviation of player $i$, all other players mine their own chain and ignore the blocks mined by player $i$. The result relies on the following fact which we will prove below: For any strategy of player $i$, the longest chain will be asymptotically solely mined by player $1$.

For each player $k$, let
\[
    N_k(t):=\left|M_t^k\right|
\]
be the number of blocks he finds in the first $t$ periods. By the strong law of large numbers, 
\[
    N_k(t)=\alpha_k t+o(t)
\]
almost surely. Define
\[R_t:=\max_{k}\max_{0\leq s\leq t}|N_k(s)-\alpha_ks|=o(t).\]
By the independent strategy, any chain contains blocks of at most one player $k\neq i$, followed possibly by a sequence of blocks mined by $i$. Suppose that for $k\neq 1$, a chain $C$ at time $t$ consists of a prefix of player $k$'s chain, whose final block was mined at time $s$, followed by blocks mined by $i$. Then
\begin{equation*}
    |C|
    \leq
    N_k(s)+N_i(t)-N_i(s).
\end{equation*}
A chain containing only blocks of $i$ satisfies the same bound with the first term equal to zero. Then
\begin{align*}
   \max_{k\neq 1} \max_{0\leq s\leq t}
    \bigl\{N_k(s)+N_i(t)-N_i(s)\bigr\}
    &\leq  3R_t+ \max_{k\neq 1}\alpha_ks+\alpha_it-\alpha_is\\
    &\leq 3R_t+\max_{k\neq 1}\alpha_kt
\end{align*}
On the other hand, player $1$'s publicly available chain, denoted $\bar C_t^1$, has length
\[
    |\bar C_t^1|=N_1(t)\geq\alpha_1t-R_t.
\]
As player $1$ has strictly higher mining power than other players, $\bar C_t^1$ will eventually be longer than any such chain not containing any block of player $1$.

Let $t$ be sufficiently large such that every longest chain at time $t$ must contain blocks of player $1$. Let $C_t^l$ be a longest public chain at time $t$ and suppose its last block belonging to player $1$ is mined at time $s$. Such a chain can be longest only if
\[
    N_1(s)+N_i(t)-N_i(s)\geq  |\bar C_t^1|=N_1(t).
\]
This requires 
\[\alpha_1 s+\alpha_i t-\alpha_i s +3R_t\geq \alpha_1 t -R_t,\]
which we write as 
\[(\alpha_1-\alpha_i)(t-s)\leq 4R_t=o(t).\]
Hence, $t-s=o(t)$.
Therefore, 
\[|C_t^l\cap M_t^i|\leq N_i(t)-N_i(s)\leq \alpha_i(t-s)+2R_t=o(t).\]
That is uniformly over all the longest chains, the fraction of $i$'s blocks goes to $0$. Therefore, $u_i=0$, and there is no profitable deviation from the proposed strategy. 

Under the proposed profile player 1 obtains utility one, which is the maximal possible utility, so player 1 cannot profitably deviate.

\end{proof}

\subsection{Proof of Proposition \ref{proposition:onlymonopoly}}
\begin{proof}
Fix arbitrary strategies for all players other than player $1$. We claim that the following strategy for player $1$ yields utility $1$: he always mines at his most recently mined block and publishes every block immediately. This will imply that in every equilibrium, $u_1=1$ and the fraction of blocks in any longest chain mined by player $1$ converges to $1$. 

Let $N_1(t)$ be the number of blocks player $1$ gets in the first $t$ periods, and let $N_{-1}(t)$ be the number of blocks mined by all other players. By the strong law of large numbers, 
\[R_t:=\max_{0\leq s\leq t}|N_1(s)-\alpha_1 s|=\max_{0\leq s\leq t}|N_{-1}(s)-(1-\alpha_1) s|=o(t)\qquad a.s.,\]
where the first equality is because $N_1(s)+N_{-1}(s)=s$.

By the mining strategy of player $1$, any chain consists of a prefix of player $1$'s chain (possibly empty) followed by blocks mined by the other players. Let $C_t^l$ be a longest public chain at time $t$ and let $s$ be the period in which the last block of player $1$ contained in $C_t^l$ was mined. Then
\[N_1(t)\leq N_1(s)+N_{-1}(t)-N_{-1}(s).\]
This implies 
\[\alpha_1 t-R_t\leq \alpha_1 s+(t-s)(1-\alpha_1)+3R_t.\]
Rearranging yields, 
\[(2\alpha_1-1)(t-s)\leq 4R_t=o(t).\]
It follows that $t-s=o(t)$.

Therefore, the number of blocks mined by other players in $C_t^l$ is at most $t-s=o(t)$. On the other hand, the length of $C_t^l $ is at least $N_1(t)=\alpha_1t+o(t)$, the length of the chain entirely built by player $1$. Thus, uniformly over all longest chains, the fraction of blocks not mined by player~$1$ converges to zero. This implies that $u_1=1$ and all other players have utility $0$.
\end{proof}

\subsection{Proof of Theorem~\ref{thm:equilibrium}}
We first show that the longest chain grows at a linear rate. 
\begin{lemma}
\label{lm:linear-growth-Ct}
\[
    \liminf_{T\to\infty}\frac{|C_T|}{T}\geq 1-\alpha_i \qquad \text{almost surely.}
\]
\end{lemma}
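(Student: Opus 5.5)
The plan is to compare $|C_T|$ with the depth of the block $b^j_{T+1}$ on which player $j$ is working. The key step is a pathwise inequality
\[
|C_T|\ \ge\ \Delta(b^j_{T+1})\ \ge\ N_j(T),
\]
where $N_j(T)=|M^j_T|$ is the number of blocks mined by $j$ in the first $T$ periods. Once this is in hand, the strong law of large numbers gives $N_j(T)/T\to 1-\alpha_i$ almost surely, and the lemma follows immediately.

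For the second inequality, use the monotonicity of $\Delta(b^j_t)$ in $t$ noted before Claim~\ref{clm:move-forward}. By that claim, every period in which $j$ mines a block (such a block is always honestly mined) strictly increases the depth of $j$'s target. Since depths are integers and $\Delta(b^j_1)=1$ (the genesis block), an induction over periods gives $\Delta(b^j_{T+1})\ge 1+N_j(T)\ge N_j(T)$.

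For the first inequality, recall that $b^j_{T+1}$ is by construction the last block of some chain in the public set $P_T$. There are three cases:
\begin{itemize}
\item under the single-chain rule, it is the tip of the unique chain in $P_T$;
\item under the $I$-switch rule, it is the tip of the longest chain;
\item under inertia (direct continuation or randomized fork resolution), it is a leaf of the public tree reached from $b^j_T$.
\end{itemize}
In every case $b^j_{T+1}$ is a public block whose entire ancestry is public, because chains in $P_T$ consist of public blocks linked back to genesis. So the chain from genesis to $b^j_{T+1}$ is a chain in $P_T$ of length $\Delta(b^j_{T+1})$. Every longest chain, and in particular the uniformly selected $C_T$, therefore has length at least $\Delta(b^j_{T+1})$.

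One subtlety must be checked: that $j$'s target is a block in the public tree rooted at genesis, that is, that $j$ never targets a block whose ancestry is not public. This holds because $j$ only ever reads labels of blocks in $P_{T}$ and follows parent pointers within $P_T$. The paper's remark that $j$ always adds blocks to the tree originating at genesis supports this. If chains were allowed to include orphan public blocks, I would restrict attention to chains starting at $(0,0)$, which is exactly the paper's definition of a chain.

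This last consistency check is the main, albeit minor, obstacle; the rest is routine. Dividing by $T$ and taking $\liminf$ then yields the claim:
\[
\liminf_{T\to\infty}\frac{|C_T|}{T}\ \ge\ \lim_{T\to\infty}\frac{N_j(T)}{T}\ =\ 1-\alpha_i \qquad\text{almost surely.}
\]
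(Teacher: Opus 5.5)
Your proposal is correct and is essentially the paper's proof: it uses the pathwise bound $|C_T|\ge\Delta(b^j_{T+1})\ge|M^j_T|$, obtained from Claim~\ref{clm:move-forward}, and then the strong law of large numbers. Your extra check that $j$'s target is the tip of a genesis-rooted public chain fills in the first inequality, which the paper leaves implicit.
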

\begin{proof}
    By Claim~\ref{clm:move-forward}, when player $j$ mines a block at time $t$, $\Delta(b_{t+1}^j)>\Delta(b_t^j)$. Hence,
    \[|C_T|\geq \Delta(b_{T+1}^j)\geq |M_T^j| .\]
By the strong law of large numbers,
\[
    \frac{|M_T^j| }{T}\to 1-\alpha_i
\]
almost surely. Hence
\[
    \liminf_{T\to\infty}\frac{|C_T|}{T}
    \geq
    \lim_{T\to\infty}\frac{|M_T^j| }{T}
    =
    1-\alpha_i
\]
almost surely.
\end{proof}

We need to show that $\E{u_i} \leq \alpha_i$. We claim that it suffices to show that $\Pr{u_i > \alpha_i} = 0$. 
\begin{claim}
    \label{clm:as}
    Suppose that $i$ has a strategy such that $\E{u_i} > \alpha_i$. Then $i$ has a (perhaps different) strategy such that $\Pr{u_i > \alpha_i}=1$.
\end{claim}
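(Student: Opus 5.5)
The plan is to exploit the fact that $u_i$ is a $\liminf$ of long-run fractions, and hence insensitive to what happens during any finite initial segment. The idea is to construct a "restart" strategy. Player $i$ plays the profitable strategy $\sigma_i$, and if at some point things look unfavorable, he abandons the current attempt, returns to honest mining until the game is effectively reset, and then starts a fresh copy of $\sigma_i$. The tool for detecting unfavorable outcomes is L\'evy's upward (martingale convergence) theorem. Let $\mathcal{F}_t$ be the full information up to time $t$. Then $\E{1_{u_i>\alpha_i}}{\mathcal{F}_t}\to 1_{u_i>\alpha_i}$ almost surely. Since $\E{u_i}>\alpha_i$, we have $\Pr{u_i>\alpha_i}=p>0$.

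The first key step is to make a \emph{reset} meaningful. Under inertial mining, if $i$ publishes all his blocks and then mines honestly, player $j$'s behaviour eventually settles onto a single public chain. Honest mining by $i$ on $b_t^j$ extends that branch, and forks stop growing. The point to check is that the continuation game after reaching such a "clean" state (one dominant chain with a lead of at least $I$ over all competitors, all of $i$'s blocks published) is strategically equivalent to the original game started from a genesis block, namely the tip of that chain. This requires that $j$'s inertial rule depends only on the public tree relative to $b_{t-1}^j$, and that competing chains $I$ or more behind never again matter. I would handle the latter by having $i$ simply ignore them in the copied strategy, treating the old tree as inert. After the reset, blocks in the new subtree contribute to $u_i$ exactly as in a fresh game, because prefixes of length $o(T)$ do not affect $\liminf$ fractions. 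Lemma~\ref{lm:linear-growth-Ct} guarantees that the longest chain grows linearly, so any finite prefix is indeed negligible.

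The second step is to define the restart rule. Fix a threshold $\delta$ with $0<\delta<1$. At epoch $k$, run the fresh copy of $\sigma_i$, and abort at the first time $\tau$ when the conditional probability of success $\E{1_{u_i>\alpha_i}}{\mathcal{F}_\tau}$ falls below $\delta$. Player $i$ can compute this quantity: all other randomness is either observed by him or is the mining/labelling process, and $j$'s strategy is known. Strictly, $i$ sees only $H^i_t$, but since $j$ is deterministic given public data, $i$ knows all of $\mathcal{F}_t$ except the uniform labels, which are irrelevant. The payoff from each epoch then breaks into two cases. On $\{u_i>\alpha_i\}$, the martingale converges to $1$, so almost surely (except on a null set) the abort never occurs. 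On the complementary event it converges to $0$, so the abort occurs in finite time. Hence each epoch either succeeds forever, with $u_i>\alpha_i$ computed within that epoch, or ends in finite time. By the strong Markov property at the reset time, fresh epochs are independent copies, each succeeding with probability at least $p$; in fact a non-aborting epoch succeeds with probability one given non-abort. So almost surely some epoch never aborts, and $u_i>\alpha_i$ on that event. Since earlier epochs are finite prefixes, they do not affect $u_i$.

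The main obstacle is the reset step. It requires showing that after an abort, player $i$ can in finite time almost surely drive the public state to one that is equivalent to the start of the game, and that the leftover forks from the aborted epoch cannot later interfere, for example by being revived into a longest chain. My first approach would be to have $i$ mine honestly, with $j$ mining honestly as well, until the branch at $b^j_t$ leads every other public chain by at least $I$. After that point, $j$'s $I$-switch or direct-continuation rule keeps him on this branch regardless of the old branches, since those never grow again: only $i$ could extend them, and he will not. A random walk with positive drift, which honest mining gives (the branch grows by one every period), makes this happen in finite time almost surely. One subtlety remains: the copied strategy must treat the old tree as invisible, so I would define the epoch-$k$ strategy by applying $\sigma_i$ to the public subtree rooted at the reset tip.
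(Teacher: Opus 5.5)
Your overall route is the paper's: apply L\'evy's upward theorem to the conditional probability that $\sigma_i$ succeeds, abort and restart when it drops below a threshold, and use the fact that a finite prefix does not affect a $\liminf$ of fractions. Your reset step is more careful than the paper's. The paper simply restarts with $b_s^j$ as the new genesis and never checks that leftover forks or withheld blocks leave $j$'s behaviour unchanged. Driving the public tree to a state where $j$'s branch leads every other chain by at least $I$, and then running $\sigma_i$ on the subtree, does make the continuation a faithful copy of a fresh game. Note that during this phase the lead grows deterministically, one block per period, so no random-walk argument is needed.

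The gap is in the per-epoch success probability. You fix an arbitrary $\delta\in(0,1)$ and assert that on $\{u_i>\alpha_i\}$ the abort never occurs because the martingale converges to $1$. That does not follow: convergence to $1$ says nothing about an earlier excursion below $\delta$. Your conclusion that each epoch succeeds with probability at least $p$ is therefore unjustified, and it is false in general. At the start of an epoch the conditional success probability equals $p$, so if $\delta>p$ you abort immediately in every epoch and never succeed.

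The fix is to take $\delta<p$ (the paper uses $p/2$) and apply optional stopping at the abort time $\tau$. Write $p_t$ for the conditional success probability of the fresh copy and $A$ for its success event. Then
\[
p=\Pr{A}=\E{p_\tau\mathbf{1}_{\{\tau<\infty\}}}+\Pr{A\cap\{\tau=\infty\}}\le\delta\,\Pr{\tau<\infty}+\Pr{\tau=\infty},
\]
so $\Pr{\tau=\infty}\ge (p-\delta)/(1-\delta)>0$. With this bound and your independence-across-epochs argument, the number of aborted epochs is geometric and hence almost surely finite. Your observation that a non-aborting epoch succeeds almost surely, since its limit is at least $\delta>0$ and hence equals $1$, then completes the proof.
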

The idea of the proof is that since the payoff does not depend on what fraction of blocks the player had at any finite time, the player can always ``start from scratch'' and draw a new $u_i$ if they think it is unlikely to be above $\alpha_i$.
\begin{proof}[Proof of Claim~\ref{clm:as}]
Suppose that $\Pr{u_i > \alpha_i} = p > 0$, and let $A$ be the event $\{u_i > \alpha_i\}$. Then $p_t=\E{A}{H_t^i}$ is a bounded martingale, and converges almost surely to the indicator of the event $A$. Consider the following alternative strategy for player $i$: follow the original strategy, unless there is some time $s$ such that $p_s < p/2$. If no such time exists then $\lim_t p_t=1$, the event $A$ occurs and the utility is strictly above $\alpha_i$. If there is a time $s$ such that $p_s < p/2$, the player ``starts from scratch'', implementing the original strategy but treating the block $b_s^j$ as the genesis block. By Lemma \ref{lm:linear-growth-Ct}, $|C_t|\rightarrow\infty$ so that for any time $t'$ the payoff can also be written as
\begin{align*}
    u_i = \liminf_t  \frac{1}{\#\text{ of longest chains at time }t }\sum_{C_t: \text{ a longest chain at time $t$}} \frac{|C_t \cap (M_t^i \setminus M_{t'}^i)|}{|C_t|}.
\end{align*}
In this expression, we only include blocks mined after time $t'$, and yet the payoff is the same, because the blocks before time $t'$ form a vanishing fraction of the total blocks.

Repeating this whenever $\Pr{u_i > \alpha_i}{H_t^i}$ goes below $p/2$ yields, by the law of large numbers, a strategy such that $u_i > \alpha_i$ almost surely. 
\end{proof}

\medskip

\begin{proof}[Proof of Proposition~\ref{prop:QR}]
We write $\alpha:=\alpha_i$ throughout the proof. We fix $t\leq T$.

Let $Z_n$ be the difference between the numbers of blocks mined by $i$ and $j$ in the $n$ periods following period $t$. Then $Z_n$ can be written as 
\[
    Z_n:=\sum_{\ell=1}^n X_\ell,
    \qquad
    \Pr{X_\ell=1}=\alpha,
    \qquad
    \Pr{X_\ell=-1}=1-\alpha,
\]
where the $X_\ell$'s are independent, and capture the identity of the player in periods $t+1,\ldots,t+n$ who mined a block. Note that process $(Z_n)_n$ is a sum of i.i.d.\ random variables, or a random walk. It is a biased random walk, with drift to the left: $\E{Z_n} = n(2\alpha-1) < 0$, and so it is unlikely to ever go far to the right. A standard calculation shows that the probability that $Z_n$ is ever large is 
\begin{align}
  \label{eq:rw}
  \Pr{\text{there exists n such that } Z_n \geq k} = \left(\frac{\alpha}{1-\alpha}\right)^k.
\end{align}

If $m_t$ is honestly mined, then $m_t$ cannot kill any block, in which case $S_t=(1-\alpha)\ell_t\leq 1-\alpha$. Henceforth $m_t$ is dishonestly mined. 

Since $S_t=0$ unless player $i$ mines in period $t$, it is enough to prove
\[
    \E{S_t\mid H_t^i,m_t\in M^i_t}\leq 1-\alpha+r_{T-t}
\]
for some summable $(r_n)_{n\geq0}$. 

Let $w_t$ be the closest initial dishonest ancestor of $m_t$ (which possibly coincides with $m_t$). 
By definition, $m_t$ can possibly kill a block if 
\begin{equation}\label{eq:J-close}
     \Delta(m_t)\leq\Delta(w_t)+J.
\end{equation}
If $m_t$ does not kill a block, then $S_t=(1-\alpha)\ell_t\leq 1-\alpha$, and so it suffices to consider the case that $m_t$ does kill at least one block. Accordingly, suppose henceforth that \eqref{eq:J-close} holds. Since the predecessor of $w_t$ is honestly mined, and since the $\Delta(b_s^j)$ is non-decreasing in $s$,
\[
    \Delta(b_t^j)\geq\Delta(w_t)-1.
\]
Consequently,
\begin{equation}
    \Delta(m_t)-\Delta(b_t^j)\leq J+1.
    \label{eq:initial-depth-bound}
\end{equation}

We let
\[
\begin{split}
    k_t^{\mathrm{cous}}
    &:=
    \mathbf 1_{\{
       \text{$m_t$ is the cousin and killer of some block}
    \}},\\
    k_t^{\mathrm{far}}
    &:=
    k_t-k_t^{\mathrm{cous}},\\
    Q_t&:=(1-\alpha) \ell_t+\alpha k_t^{\mathrm{cous}},\\
    R_t&:=\alpha k_t^{\mathrm{far}}.
\end{split}
\]

Since, by assumption, $m_t$ is within $J$ of $w_t$, it can kill one block by being its cousin (accounted for by $k_t^{\mathrm{cous}}$). If $m_t=w_t$ is initial, it can kill additional blocks that are farther down the chain (accounted for by $k_t^{\mathrm{far}}$).

Using this notation, 
\[S_t=Q_t+R_t.\]
\textbf{Far killing}.
$k_t^{\mathrm{far}}=0$ unless $m_t$ is initial. Supposing that $m_t$ is initial,  $m_t$ can kill at most one block at each depth $d\geq \Delta(m_t)+J+1$. 

Let $k\geq J+1$. If a block of depth $\Delta(m_t)+k$ is killed, let $c\in C_T$ be the cousin of the killed block. By the definition of killer, all blocks from $m_t$ to $c$ are dishonestly mined and hence are mined by $i$. 

If player $j$ ever begins mining on a descendant of $c$ by time $T$, consider the first such time $t'$. Before this time, the branch containing $c$ is only extended by player $i$, while every block mined by player $j$ strictly increases the depth of blocks $j$ mines at. Since player $j$ never switches to a shorter chain, player $i$ must have mined at least as many blocks as player $j$ in the time interval $[t,t']$. Thus, $1+Z_{t'-t}\geq 0$. If player $j$ never begins mining on a descendant of $c$ by time $T$,
the same inequality holds at $t'=T$, because the branch
containing $c$ is a longest chain at time $T$. 

Summing over the blocks of depth at least $\Delta(m_t)+J+1$, we have 
\begin{equation}
\begin{split}
    \E{R_t\mid H_t^i,m_t\in M^i_t}
    &\leq
    \alpha\sum_{k=J+1}^{\infty}\mathbb{P}(1+Z_{t'-t}\geq 0\text{ for some }t'\geq k)
\end{split}
    \label{eq:far-charge}.
\end{equation}
Let $r_J'$ be this upper bound, which is independent of $T$, since $(Z_n)_n$ is simply a sum of i.i.d.\ random variables. As $\mathbb{P}(1+Z_{t'-t}\geq 0\text{ for some }t'\geq k)$ decays exponentially fast in $k$, $r_J'\rightarrow 0$ as $J\rightarrow\infty$.

\paragraph{Survival and same-depth killing.}

Let
\[
    \tau:=\inf\{s>t:
        b_s^j=m_t
        \text{ or }b_s^j\text{ is a descendant of }m_t\}
\]
be the first period after $t$ in which player $j$ mines at a block that is descendant of $m_t$ (or $m_t$ itself). If this set is empty, let $\tau=\infty$.  There are three ways in which $j$ starts to mine at some descendant of $m_t$ before time $T$, i.e. $\tau\leq T$. 

\begin{enumerate}
\item \textbf{Randomized fork resolution.}
Let $\mathcal{T}$ be the event that $m_t$ goes through a randomized fork resolution (i.e., player $j$ uses the public correlation device to randomize the choice between the chain containing $m_t$ and other chains of the same length), and let $\mathcal{T}'$ be the sub-event that $m_t$ wins the one-depth fork resolution. Conditional on $\mathcal{T}$, the chain containing $m_t$ is selected with probability at most $1/2$, by the definition of the tie-breaking distribution on chains. On $\mathcal{T}'$, we have the trivial bound $Q_t\leq 1$.

\item \textbf{$I$-switch.}
Let $\mathcal{I}$ be the event that player $j$'s mining target first becomes a descendant of $m_t$ because they find a chain containing $m_t$ that is at least $I$ longer than the next longest chain. Suppose such a switch occurs at time period $t+n$. Up to that time, every block on the adopted chain after $m_t$ is mined by $i$. Every block mined by $j$ strictly increases where player $j$ mines at. Hence, 
\[Z_n\geq I-(\Delta(m_t)-\Delta(b_t^j))\geq I-J-1,\]where the last inequality follows from the condition
\eqref{eq:initial-depth-bound}. By the standard biased random walk estimate \eqref{eq:rw}, 
\begin{align}
    \mathbb{P}(\mathcal I\mid H_t^i,m_t\in M^i_t)
    &\leq\mathbb{P}(\text{there exists }n\text{ such that }Z_n\geq I-J-1)\nonumber\\
    &=\left(\frac{\alpha}{1-\alpha}\right)^{I-J-1}.
    \label{eq: bound I}
\end{align}
Importantly, this is small when $I-J$ is large. Again, on the event of $\mathcal I$, we have the trivial bound $Q_t\leq 1$.

\item \textbf{Direct continuation.}
Let $\mathcal{U}$ be the event that player $j$ first mines at $m_t$ or its descendant by direct continuation—that is, without a tie-breaking step and without an $I$-switch. Since $m_t$ is dishonestly mined, if player $j$ gets the block at time $t+1$, it necessarily creates a fork, so adoption of the branch containing $m_t$ can no longer be uncontested. Therefore, 
\begin{equation}
    \mathbb{P}(\mathcal U\mid H_t^i,m_t\in M^i_t)
    \leq\alpha.
    \label{eq:uncontested-bound}
    \end{equation}
On $\mathcal U$, there is no block of $j$ at depth $\Delta(m_t)$ so that $k_t^{\mathrm{cous}}=0$ and 
    \[    Q_t\leq 1-\alpha.\]

\end{enumerate}

Next, we consider the case $\tau>T$, i.e., the case that $j$ does not adopt $m_t$ by time $T$. Let
\[
    \mathcal E   := \{\tau>T,\ m_t\in C_T\}.
\]
On this event, every block after $m_t$ is mined by player $i$ by time $T$. Since the branch $C_T$ containing $m_t$ is longest at time $T$, \eqref{eq:initial-depth-bound} implies 
\[Z_{T-t}\geq-(J+1). \]
Let
\[r_n:=\mathbb{P}(Z_{n}\geq -(J+1)).\]
Then
\begin{equation}
\mathbb{P}(\mathcal E\mid H_t^i,m_t\in M^i_t)\leq r_{T-t}.
\label{eq:bound E}
\end{equation}
Moreover, $r_n$ decays exponentially fast in $n$. In particular, $(r_n)_n$ is summable. On the event of $\mathcal E$, we have the trivial bound $Q_t\leq 1$.

Summing up the events above, we reach the bound
\[
    Q_t
    \leq
    (1-\alpha)\mathbf 1_{\mathcal U}
    +\mathbf 1_{\mathcal T'}
    +\mathbf 1_{\mathcal I}
    +\mathbf 1_{\mathcal E}.
\]
By definition, $\mathcal{T}$ and $\mathcal{U}$ are disjoint so that
\begin{equation*}
    \begin{split}
        &\E{(1-\alpha)\mathbf 1_{\mathcal U}
    +\mathbf 1_{\mathcal T'}\mid H_t^i,m_t\in M^i_t}\\
\leq\,&\E{(1-\alpha)\mathbf 1_{\mathcal U}
    +\frac{1}{2}\mathbf 1_{\mathcal T}\mid H_t^i,m_t\in M^i_t}\\
     \leq\,&\E{(1-\alpha)\mathbf 1_{\mathcal U}
    +\frac{1}{2}\mathbf 1_{\mathcal U^c}\mid H_t^i,m_t\in M^i_t}\\
    \leq\,& (1-\alpha)\alpha+\frac{1}{2}(1-\alpha)\\
    =\,&(1-\alpha)(\alpha+\frac{1}{2})
    \end{split},
\end{equation*}
where we use \eqref{eq:uncontested-bound} for the last inequality. 
Together with \eqref{eq: bound I} and \eqref{eq:bound E},
 \begin{equation}
 \begin{split}
    \E{Q_t\mid H_t^i,m_t\in M^i_t}
    &\leq
    (1-\alpha)(\alpha+\frac{1}{2})+\E{\mathbf 1_{\mathcal I}\mid H_t^i,m_t\in M^i_t}+\E{\mathbf 1_{\mathcal E}\mid H_t^i,m_t\in M^i_t}\\
     &\leq (1-\alpha)(\alpha+\frac{1}{2})+(\alpha/1-\alpha)^{I-J-1}+r_{T-t}.
     \label{eq:local-charge}
     \end{split}
\end{equation}

Combining \eqref{eq:far-charge} and \eqref{eq:local-charge},
\[\E{S_t\mid H_t^i,m_t\in M^i_t}\leq r_J'+ (1-\alpha)(\alpha+\frac{1}{2})+(\alpha/1-\alpha)^{I-J-1}+r_{T-t}.\]
Choose $J$ and $I-J$ large enough such that 
\[r_J'+(\alpha/1-\alpha)^{I-J-1}\leq (1-\alpha)(\frac{1}{2}-\alpha),\]
then
\[\E{S_t\mid H_t^i,m_t\in M^i_t}\leq 1-\alpha+r_{T-t}.\]
We are able to conclude from here as $S_t=0$ if player $i$ does not mine a block in period $t$. 
\end{proof}

We will need the following general lemma regarding the limit of the ratio of two stochastic processes.
\begin{lemma}\label{lm: liminfratio}
Suppose that $\{X_t\}$ and $\{Y_t\}$ are two random sequences and $a>0$ is a constant such that 
\begin{enumerate}
    \item $X_t,Y_t\in [0,1]$.
    \item $\liminf_t\E{aY_t-X_t}\geq 0$.
    \item $\liminf_t\E{Y_t}> 0$.
\end{enumerate}
Then it is impossible that
\[\liminf_t \frac{X_t}{Y_t}>a \;\text{ a.s.}\]
\end{lemma}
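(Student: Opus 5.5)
Suppose for contradiction that $\liminf_t X_t/Y_t > a$ almost surely. The plan is to turn this almost-sure statement into an expectation inequality that contradicts hypothesis 2. Hypothesis 3 will be used to keep the slack from collapsing.

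\textbf{Step 1: a uniform margin.} For $\delta>0$ and $s\in\mathbb{N}$, let $A_{\delta,s}$ be the event that $X_t \geq (a+\delta)Y_t$ for all $t\ge s$. On the event $\liminf X_t/Y_t>a$ we may pick a rational $\delta>0$ and then $s$ so that $A_{\delta,s}$ holds. Hence $\bigcup_{\delta,s}A_{\delta,s}$ has probability one. The events are increasing in $s$ and decreasing in $\delta$, so for any $\eta>0$ there is a single $\delta>0$ and $s_0$ with $\Pr{A_{\delta,s_0}}\ge 1-\eta$. On $A=A_{\delta,s_0}$ we get, for every $t\ge s_0$,
\[
X_t - aY_t \;\ge\; \delta Y_t.
\]
Off $A$ we only use $|X_t-aY_t|\le \max(1,a)$, which follows from $X_t,Y_t\in[0,1]$.

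\textbf{Step 2: taking expectations.} For $t\ge s_0$,
\[
\E{X_t-aY_t} \;\ge\; \delta\,\E{Y_t\mathbf 1_A} - \max(1,a)\,\eta \;\ge\; \delta\bigl(\E{Y_t}-\eta\bigr) - \max(1,a)\,\eta ,
\]
where the last step uses $Y_t\le 1$. Set $c=\liminf_t \E{Y_t}>0$. The difficulty is that $\delta$ depends on $\eta$, so we cannot simply let $\eta\to0$ with $\delta$ fixed.

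\textbf{Step 3: handling the dependence of $\delta$ on $\eta$.} First fix some $\delta_0$ with $\Pr{\exists s: A_{\delta_0,s}}\ge 1-c/2$; this is possible by monotonicity in $\delta$. Now we may shrink $\eta$ only through the choice of $s$. For fixed $\delta_0$ the events $A_{\delta_0,s}$ increase in $s$ to $\{\exists s:A_{\delta_0,s}\}$. So any $\eta' > c/2$ is achievable as $1-\Pr{A}$ with $\delta=\delta_0$ held fixed. The resulting bound is then $\delta_0(c-\eta')-\max(1,a)\eta'$, which need not be positive. This is where I expect the main obstacle.

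\textbf{Better route: pointwise Fatou.} Let $W_t=X_t-aY_t+\epsilon Y_t$ for a suitable small $\epsilon>0$. Hypothesis 2 gives $\limsup_t\E{aY_t-X_t}\ge$ nothing useful directly, but we have $\liminf\E{aY_t-X_t}\ge0$. Fatou is the natural tool here, since $a Y_t - X_t$ is bounded. Fatou applied to the nonnegative variables $aY_t-X_t+\max(1,a)$ only gives $\E{\liminf(aY_t-X_t)}\le\liminf\E{aY_t-X_t}$. That inequality points the wrong way, so I would instead apply reverse Fatou to the bounded sequence:
\[
\limsup_t \E{X_t - aY_t - \delta_\omega Y_t}\le \E{\limsup_t (X_t-aY_t-\delta Y_t)}.
\]
This also does not directly close the argument.

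\textbf{Conclusion of the plan.} The cleanest path is to combine Step 1 with a lower bound on $Y_t$ on a large event. Hypothesis 3 together with $Y_t\le1$ gives $\Pr{Y_t\ge c/2}\ge c/2$ for large $t$, by a Paley--Zygmund-type estimate. We then choose $\eta<c/4$ and obtain
\[
\E{(X_t-aY_t)\mathbf 1_A}\ge \delta\,\tfrac c2\bigl(\tfrac c2-\eta\bigr),
\]
while the complement contributes at least $-\max(1,a)\eta$. Choosing $\eta$ small relative to $\delta$ is the crux. I would try to decouple the two by using integrability, $\E{(aY_t-X_t)^+\mathbf 1_{A^c}}\le\max(1,a)\Pr{A^c}$, and then pick $\eta$ as a function of $\delta$ by first selecting $\delta$ from a countable family and then letting $s\to\infty$. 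If this works, it yields $\liminf\E{X_t-aY_t}>0$, which contradicts hypothesis 2.
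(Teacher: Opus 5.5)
\textbf{Gap.} Your proposal never reaches a contradiction. Everything depends on making $\eta=\mathbb{P}(A^c)$ small compared to $\delta$, and you leave that step conditional (``if this works''). With the bound you use, it cannot be arranged in general.

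For fixed $\delta$, sending $s\to\infty$ only lowers $\mathbb{P}(A_{\delta,s}^c)$ to $\mathbb{P}(\bigcap_s A_{\delta,s}^c)\ge\mathbb{P}(L<a+\delta)$, where $L=\liminf_t X_t/Y_t$. The hypothesis $L>a$ a.s.\ says nothing about how fast $\mathbb{P}(L<a+\delta)$ vanishes as $\delta\downarrow 0$. Suppose, say, $\mathbb{P}(L<a+\delta)\ge\delta$ for all small $\delta$, as when $L-a$ is uniform on $(0,1)$. Then $\delta(c-\eta)-\max(1,a)\eta<0$ for every admissible choice, because $c\le 1\le\max(1,a)$.

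The Paley--Zygmund refinement does not rescue this. The positive term is at most $\delta$ however you bound $Y_t$ from below. The real culprit is the crude bound $X_t-aY_t\ge-\max(1,a)$ on $A^c$. It throws away the fact that on $A^c$ the ratio also eventually exceeds $a$, just with a smaller margin.

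\textbf{Missing idea.} The bad event should shrink with $t$, rather than being a fixed event of probability $\eta$.

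Under the contradiction hypothesis, $(aY_t-X_t)^+\to 0$ almost surely and is bounded by $a$. By bounded convergence, $\mathbb{E}[(aY_t-X_t)^+]\to 0$. Hence, for $t\ge s_0$,
$\mathbb{E}[X_t-aY_t]\ge\delta\,\mathbb{E}[Y_t\mathbf{1}_A]-\mathbb{E}[(aY_t-X_t)^+]\ge\delta(\mathbb{E}[Y_t]-\eta)-o(1)$.
Any fixed $\delta,s_0$ with $\eta\le c/2$ then give $\liminf_t\mathbb{E}[X_t-aY_t]\ge\delta c/2>0$. This contradicts hypothesis 2, which says $\limsup_t\mathbb{E}[X_t-aY_t]\le 0$. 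This is a small repair of your Steps 1--2.

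The paper reaches the same point by making the margin random. It sets $\varepsilon=\frac12\min(1,L-a)$ and takes a random time $N$ after which $X_t\ge(a+\varepsilon)Y_t$. Then $X_t-aY_t\ge\varepsilon Y_t\mathbf{1}_{\{t\ge N\}}-a\mathbf{1}_{\{t<N\}}$, so the only negative contribution is $a\,\mathbb{P}(N>t)\to 0$. Hypothesis 2 then forces $\mathbb{E}[\varepsilon Y_t\mathbf{1}_{\{t\ge N\}}]\to 0$. Hence $Y_t\to 0$ in probability and $\mathbb{E}[Y_t]\to 0$, contradicting hypothesis 3.
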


\begin{proof}[Proof of Lemma~\ref{lm: liminfratio}]
Define $Z_t := X_t - a Y_t$. Because $X_t \geq 0$ and $Y_t \leq 1$, we have a deterministic lower bound: $Z_t \geq -a Y_t \geq -a$. 

Assume for the sake of contradiction that
\begin{align*}
    \Pr{\liminf_{t \to \infty} \frac{X_t}{Y_t} > a} = 1.
\end{align*} 
Let $L = \liminf_{t \to \infty} \frac{X_t}{Y_t}$ and define the random variable $\varepsilon := \frac{1}{2} \min(1, L - a)$. Then $\varepsilon > 0$ almost surely, and there is a random variable $N$ such that for $t \geq 0$ we have $\frac{X_{N+t}}{Y_{N+t}} \geq a + \varepsilon$. We can equivalently write this as
\begin{equation*}
    Z_{N+t}=X_{N+t} - a Y_{N+t} \geq \varepsilon Y_{N+t}.
\end{equation*}
We then have for all $t$ that
\begin{equation*}
    Z_t \geq \varepsilon Y_t \mathbf{1}_{\{t \geq N\}} - a \mathbf{1}_{\{t < N\}}.
\end{equation*}
Taking the expectation and then $\limsup$ of both sides and using $(2)$ in the assumptions,
we obtain:
\begin{equation*}
    0 \geq \limsup\E{Z_t} \geq \limsup\left(\E{\varepsilon Y_t \mathbf{1}_{\{t \geq N\}}} - a \mathbb{P}(N > t)\right)=\limsup \E{\varepsilon Y_t \mathbf{1}_{\{t \geq N\}}}.
\end{equation*}
As $Y_t\geq 0$, this implies that $\lim_{t \to \infty} \E{\varepsilon Y_t \mathbf{1}_{\{t \geq N\}}} = 0$. In particular, $\varepsilon Y_t \mathbf{1}_{\{t \geq N\}}$ converges to $0$ in probability. It follows that $Y_t$ converges to $0$ in probability. Because the sequence $Y_t$ is uniformly bounded, it guarantees convergence in expectation:
\begin{equation*}
    \lim_{t \to \infty} \E{Y_t} = 0.
\end{equation*}
However, this contradicts (3) in our initial hypothesis. Therefore, it is impossible that $\liminf_{t \to \infty} \frac{X_t}{Y_t} > a$ almost surely.
\end{proof}

\begin{proof}[Proof of Theorem~\ref{thm:equilibrium}]

We first choose $J$ and $I-J$ large enough for the conclusions from Proposition \ref{prop:QR} to be true. 

We let $C_T^1,\ldots,C_T^{n_T}$ be all the longest chains at time $T$. Recall that $C_T$ is the uniformly random chain chosen from these $n$ chains. Set
\[X_T=\frac{1}{T\cdot n_T}\sum_{l=1}^{n_T}|C_T^l\cap M_T^i|,\qquad Y_T:=\frac{1}{T}|C_T|,\qquad \alpha:=\alpha_i\]
and check the conditions of Lemma \ref{lm: liminfratio}. 
The first condition in the lemma is trivially satisfied as $T=|M_T|\geq |M_T^i|$. 

We next verify the second condition. By definition,
\[
    |C_T\cap M_T^i|=\sum_{s=1}^T \ell_{s}.
\]
Moreover, every block in $M_T^j\setminus C_T$ is assigned to a
unique killer block mined at some time $s\leq T$. Hence
\[
    |C_T\cap M_T^j|
    =
    |M_T^j|-\sum_{s=1}^T k_{s}.
\]
It follows that
\begin{align*}
    \E{\alpha Y_T-X_T}
    &=    
    \E{\alpha  \frac{1}{T\cdot n_T}\sum_{l=1}^{n_T}|C_T^l\cap M_T^j|-\frac{1}{T\cdot n}\sum_{l=1}^{n_T}(1-\alpha)|C_T^l\cap M_T^i|} \\
    &=
    \frac{1}{T}
    \E{\alpha |C_T\cap M_T^j|-(1-\alpha)|C_T\cap M_T^i|} \\
    &=
    \frac{1}{T}
    \E{\alpha\left(|M_T^j|-\sum_{s=1}^T k_{s}\right)
    -(1-\alpha)\sum_{s=1}^T \ell_{s}} \\
    &=
    \frac{1}{T}
    \sum_{s=1}^T
    \left[
    \alpha(1-\alpha)
    -
    \E{\alpha k_{s}+(1-\alpha)\ell_{s}}
    \right] \\
    &=
    \frac{1}{T}
    \sum_{s=1}^T
    \left[
    \alpha(1-\alpha)-\E{S_{s}}
    \right].
\end{align*}
By Proposition~\ref{prop:QR} and the law of iterated expectations,
\[
    \E{S_{s}}\leq \alpha(1-\alpha)+r_{T-s}.
\]
Therefore
\[
    \E{\alpha Y_T-X_T}
    \geq
    -\frac{1}{T}\sum_{s=1}^T r_{T-s}
    =
    -\frac{1}{T}\sum_{n=0}^{T-1}r_n.
\]
Since $(r_n)_n$ is summable,
\[
    \liminf_{T\to\infty}\E{\alpha Y_T-X_T}\geq 0.
\]
Lastly, by Lemma \ref{lm:linear-growth-Ct}, $|C_T|$ has a linear growth rate so that $\liminf_t\E{Y_t}> 0$. 

Lemma~\ref{lm: liminfratio}  states that it is impossible that 
\[\liminf_t \frac{X_t}{Y_t}>\alpha \quad a.s.,\]
which translates to the impossibility of
\[
    u_i=\liminf_{T\to\infty}\frac{1}{n_T}\sum_{l=1}^{n_T}
    \frac{|C_T^l\cap M_T^i|}{|C_T^l|}
    >
    \alpha_i \quad a.s.
\]

By Claim \ref{clm:as}, it is impossible that $\E{u_i}> \alpha_i$ as desired. 
\end{proof}
\end{document}